\documentclass[11pt]{article}
\bibliographystyle{plainurl}


\usepackage{amsmath,amssymb,amsthm}
\usepackage{paralist} 
\usepackage{times}
\usepackage{epsfig,epsf,psfrag,array,verbatim,mathtools}
\usepackage{url,comment}
\usepackage[usenames,dvipsnames,svgnames,table]{xcolor}

\usepackage[ruled,linesnumbered,vlined]{algorithm2e}

\usepackage{pdfpages,framed}
\usepackage{tikz}
\usetikzlibrary{arrows,patterns}
\usetikzlibrary{matrix,calc,decorations.pathreplacing}
\usepackage{color}


\theoremstyle{plain}
\setlength{\textwidth}{6.5in}
\setlength{\oddsidemargin}{0.0in}
\setlength{\textheight}{8.8in}
\addtolength{\topmargin}{-.875in}

\newtheorem{theorem}{Theorem}
\newtheorem{lemma}{Lemma}

\newtheorem{problem}{Problem}
\newtheorem{myquestion}{Question}


\newcommand{\vol}{\textsc{vol}}
\newcommand{\parity}{\mathrm{parity}}
\newcommand{\textsum}{\mathrm{sum}}

\newcommand{\avd}{\textsc{Avoid}}

\newcommand{\set}[1]{\left\{ #1 \right\}}
\newcommand{\abs}[1]{\left| #1 \right|}
\newcommand{\paren}[1]{\left( #1 \right)}
\newcommand{\ceil}[1]{\left\lceil #1 \right\rceil}
\newcommand{\floor}[1]{\left\lfloor #1 \right\rfloor}

\newcommand{\s}{{\cal S}}

\def\R{\mathcal{R}}
\def\l1{\mathbf{L_1}}

\newcommand{\ia}{\alpha}
\newcommand{\ib}{\beta}

\long\def\comment #1\commentend{}
\def\inline#1:{\par\smallskip \noindent{\bf #1:}\hskip 5pt}


\title{Efficiently Realizing Interval Sequences}

\author{
Amotz Bar-Noy%
\thanks{City University of New York (CUNY), USA. E-mail: amotz@sci.brooklyn.cuny.edu},
Keerti Choudhary%
\thanks{Tel Aviv University, Israel. E-mail: keerti.choudhary@cs.tau.ac.il},
David Peleg%
\thanks{Weizmann Institute of Science, Israel. E-mail: david.peleg@weizmann.ac.il},
Dror Rawitz%
\thanks{Bar Ilan University, Israel. E-mail: dror.rawitz@biu.ac.il}
}
\date{}
\begin{document}
\maketitle
\begin{abstract}

We consider the problem of realizable interval-sequences. An interval sequence comprises of $n$ integer intervals $[a_i,b_i]$ such that $0\le a_i\leq b_i \le n-1$, and is said to be {\em graphic/realizable} if there exists  a graph with degree sequence, say, $D=(d_1,\ldots,d_n)$ satisfying the condition $a_i\leq d_i\leq b_i$, for each $i\in[1,n]$. There is a characterisation (also implying an $O(n)$ verifying algorithm) known for realizability of interval-sequences, which is a generalization of the Erd\"os-Gallai characterisation for graphic sequences. However, given any realizable interval-sequence, there is no known algorithm for computing a corresponding graphic certificate in $o(n^2)$ time.

In this paper, we provide an $O(n \log n)$ time algorithm for computing a graphic sequence for any realizable interval sequence. In addition, when the interval sequence is non-realizable, we show how to find a graphic sequence having minimum deviation with respect to the given interval sequence, in the same time. Finally, we consider variants of the problem such as computing the most regular graphic sequence, and computing a minimum extension of a length $p$ non-graphic sequence to a graphic one.

\end{abstract}

\section{Introduction}

The {\em Graph Realization problem} for a property $P$ 
deals with the following existential question:
Does there exist a graph that satisfies the property $P$?
Its fundamental importance is apparent, ranging from better theoretical
understanding, to network design questions
(such as constructing networks with certain desirable connectivity properties).
Some very basic, yet challenging, properties that have been considered in past
are degree sequences \cite{EG60,hakimi62,havel55},
eccentricites \cite{Behzad1976,Lesniak1975}, connectivity and flow~\cite{GH61,FC70,F92,F94}.

One of the earliest classical problems studied in this domain is that of
{\em graphic sequences}. A sequence of $n$ positive integers,
$D=(d_1,\ldots,d_n)$, is said to be {\em graphic} if there exists
an $n$ vertex graph $G$ such that $D$ is identical to the sequence of 
vertex degrees of $G$. 
The problem of realizing graphic sequences and counting the number of non-isomorphic 
realizations of a given graphic-sequence, is particularly of interest due to many practical
applications, see~\cite{Tatsuya2013} and reference therein.
In 1960, Erd\"os and Gallai~\cite{EG60} gave a 
characterization (also implying an $O(n)$ verifying algorithm) for graphic 
sequences. Havel and Hakimi~\cite{hakimi62,havel55} gave a recursive algorithm 
that given a sequence $D$ of integers computes a realizing graph,
or proves that the sequence is non-graphic, in optimal time $O(\sum_i d_i)$.
Recently, Tripathi et al.~\cite{TripathiVW:10} provided a constructive proof
of Erd\"os and Gallai's~\cite{EG60} characterization.

We consider a generalization of the graphic sequence
problem where instead of specifying precise degrees, we are given a
\emph{range} (or interval) of possible degree values for each vertex.
Formally, an \emph{interval-sequence} is a sequence of $n$ intervals
$\s =([a_1,b_1],\ldots,[a_n,b_n])$, also represented as $\s=(A,B)$, where $A=(a_1,\ldots,a_n)$ and
$B=(b_1,\ldots,b_n)$, and $0\le a_i \le b_i \le n-1$ for every $i$.
It is said to be \emph{realizable} if there exists a sequence
$D=(d_1,\ldots,d_n)$ that is graphic and satisfies the condition
$a_i\leq d_i\leq b_i$, for $1\le i\le n$. Two questions that are
natural to ask here are:

\begin{myquestion}[Verification]
Find an efficient algorithm for verifying the realizability of any
given interval-sequence $\s $?
\end{myquestion}

\begin{myquestion}[Graphic Certificate]
Given a realizable interval-sequence $\s $, compute a certificate
(that is, a graphic sequence $D$) realizing it.
\end{myquestion}

Cai et al.~\cite{CDZ:00} extended Erd\"os and Gallai's work by providing
an easy to verify characterization for realizable interval-sequences,
thereby resolving Question 1. Their result crucially uses the $(g,f)$-Factor
Theorem of Lov{\'{a}}sz~\cite{lovasz:70}.
Garg et al.~\cite{GargGT:11} provided a constructive proof of the
characterisation of Cai et al.~\cite{CDZ:00} for realizable interval sequences.
In~\cite{HellK:09}, Hell and Kirkpatrick provided an algorithm based on Havel
and Hakimi's work for computing a graph that realizes an interval sequence
(if exists). For non-realizable interval sequences $\s$, their algorithm computes
a graph whose deviation $\delta(D,\s)$ (see Section~\ref{section:prelim} for definition) 
with respect to L1-norm is minimum.
The time complexity of their algorithm is $O(\sum_{i=1}^n b_i)$
(which can be as high as $\Theta(n^2)$).

\subparagraph*{Our Contributions.}
In this paper we introduce a new approach for representing and
analyzing the interval sequence realization problem. 
Our algorithms are based on a novel divide and conquer methodology, 
wherein we show that partitioning a realizable interval sequence along any levelled sequence
(a new class of sequences introduced herein) guarantees that at least one
of the new child interval sequences is also realizable.
This enables us to present an $O(n \log n)$ time algorithm
for computing a graphic certificate (if exists) for any given interval sequence.
While the problem was well studied, to the best of our knowledge
there was no known $o(n^2)$ time algorithm for computing graphic certificate.
In addition, given an interval sequence $\s$, our algorithm can obtain in the same time a degree-certificate corresponding to graphs with minimum (resp. maximum) possible edges.
Specifically, we obtain the following result.

\begin{theorem}\label{theorem-our-result-1}
There exists an algorithm that for any integer $n\geq 1$ and any
length $n$ interval sequence $\s$, computes a graphic sequence $D$
realizing $\s$, if exists, in $O(n\log n)$ time.

Moreover, our algorithm can also output in the same time graphic certificates corresponding to a sparsest as well as densest possible graph 
(i.e. graphs with minimum and maximum possible edges), realizing~$\s$.
\end{theorem}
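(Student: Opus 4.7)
The plan is a divide-and-conquer driven by the structural result highlighted in the ``Our Contributions'' paragraph: for any realizable interval sequence $\s=([a_i,b_i])_{i=1}^n$ and any \emph{levelled} sequence $L=(L_1,\dots,L_n)$ with $a_i\le L_i\le b_i$, at least one of the two children
\[
\s^- \;:=\; \bigl([a_i,\min(b_i,L_i)]\bigr)_{i=1}^n
\qquad\text{and}\qquad
\s^+ \;:=\; \bigl([\max(a_i,L_i),b_i]\bigr)_{i=1}^n
\]
is still realizable. Since Cai et al.~\cite{CDZ:00} give an $O(n)$ realizability test, I only need to decide which side is realizable and recurse into it.

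I would pick $L$ so that $L_i$ sits near the midpoint of $[a_i,b_i]$ whenever that interval is still nondegenerate, subject to whatever monotonicity the definition of ``levelled'' imposes; this monotonicity can be pre-established by sorting the intervals once at the outset and maintained as the intervals shrink. With this choice, whichever child is selected, every individual interval width shrinks by a constant factor, so after $r=O(\log n)$ rounds every interval has collapsed to a single integer, and the resulting sequence $D$ is by construction a componentwise refinement of $\s$ and is realizable, hence graphic. Each of the $r$ rounds costs $O(n)$ for the verification test plus the bookkeeping of the shrunken intervals, yielding the claimed $O(n\log n)$ running time.

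For the sparsest (resp.\ densest) variant I would change the recursion's preference rule: always descend into $\s^-$ (resp.\ $\s^+$) when it is realizable, and fall back to the other side otherwise. By induction this invariant forces the resulting $D$ to be the componentwise-minimum (resp.\ -maximum) graphic refinement of $\s$, which in particular minimizes (resp.\ maximizes) $\sum_i d_i$ and hence the edge count of any realization. The main obstacle is the structural lemma itself---proving that at least one side of a levelled split of a realizable sequence remains realizable. I expect this to proceed by contradiction: assume both $\s^-$ and $\s^+$ violate the Cai et al.\ inequality at some indices, then use the levelled shape of $L$ to glue those two witnesses into a single violating witness for $\s$, contradicting realizability. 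A secondary technical hurdle is engineering a levelled $L$ that simultaneously halves interval widths and is computable in $O(n)$ per round, which is what forces a sorted representation of the intervals to be maintained throughout the recursion.
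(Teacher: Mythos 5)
Your high-level plan coincides with the paper's: split a realizable interval sequence along a levelled sequence, use the fact that at least one child is realizable (this is exactly the paper's Theorem~\ref{theorem:subdivide_interval}), and recurse with an $O(n)$ realizability test per round. However, there are two genuine gaps. First, the splitting sequence you propose --- $L_i$ near the midpoint of each $[a_i,b_i]$ --- is \emph{not} levelled in the paper's sense. A levelled sequence is a ``water-level'' sequence: there is a single global level $\ell$ such that $d_i=b_i$ when $b_i\le\ell$, $d_i=a_i$ when $a_i\ge\ell$, and $d_i\in\{\floor{\ell},\ceil{\ell}\}$ otherwise (Theorem~\ref{theorem:characterisation_levelled_sequence}); equivalently, no spread-decreasing transfer between coordinates is possible within the bounds. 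Per-interval midpoints violate this whenever two midpoints differ by $2$ or more, and the structural lemma is proved only for levelled splits: the paper's argument is that (i) the edge-switching operation shows levelling preserves graphicity, so every realizable $\s$ contains a \emph{levelled} graphic sequence, and (ii) any two levelled sequences in $\s$ are coordinatewise comparable after a permutation (Lemma~\ref{lemma:similar_sequence}), so that levelled graphic sequence sits on one side of the split. Your alternative route --- gluing two violating Cai--Deng--Zang witnesses into one for $\s$ --- is left entirely unexecuted and is the crux of the whole result; without it (or the paper's switching-plus-comparability argument) the proof does not stand.

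Second, your complexity accounting is wrong even granting the lemma: with a levelled split, individual interval widths do \emph{not} all shrink by a constant factor (e.g., every interval with $a_i\ge\ell$ is untouched in the upper child $(C,B)$). The quantity that provably halves per round is the total distance $\mathbf{L_1}(A_0,B_0)=\sum_i(b_i-a_i)$, because the levelled sequence is chosen to have volume $\floor{\mathbf{L_1}(A_0,B_0)/2}$; since this starts at most $n(n-1)$, one still gets $O(\log n)$ rounds, but for a different reason than the one you give. Your preference rule for the sparsest/densest variants is the right idea (it is what the paper's algorithm does), though the justification should go through minimizing the volume $\textsum(D)-\textsum(A)$ rather than a claimed componentwise-minimum refinement, which need not exist.
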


We also investigate the problem of efficiently computing graphic sequences having the least possible $L_1$-deviation in the scenario when the input interval sequence is non-realizable. We must point out here that till now there was also no sub-quadratic time algorithm known for computing even the deviation $\delta(D,\s)$. Our result for deviation minimizing certificate can be formalized as follows.

\begin{theorem}\label{theorem-our-result-1b}
There exists an algorithm that for any integer $n\geq 1$ and any
length $n$ non-realizable interval sequence $\s$, outputs in $O(n\log n)$ time a graphic
sequence $D$ minimizing the deviation $\delta(D,\s)$.
\end{theorem}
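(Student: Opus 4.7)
The plan is to reduce Theorem~\ref{theorem-our-result-1b} to Theorem~\ref{theorem-our-result-1} by reinterpreting minimum-deviation as minimum-weight \emph{expansion} of the interval sequence. Call $\hat{\s} = ([a_i - p_i, b_i + q_i])_i$ with $p_i, q_i \ge 0$ (and clipped to $[0, n-1]$) an expansion of $\s$ of weight $w(\hat{\s}) = \sum_i (p_i + q_i)$. A short two-way argument shows that $\min_{D \text{ graphic}} \delta(D, \s) = \min\{w(\hat{\s}) : \hat{\s} \text{ is realizable}\}$: for ``$\le$'', any realization $D$ of a minimum-weight expansion (obtained via Theorem~\ref{theorem-our-result-1}) satisfies $\delta(D, \s) \le w(\hat{\s})$ coordinate by coordinate; for ``$\ge$'', given the optimal $D^*$, the expansion $p_i = \max(a_i - d_i^*, 0)$, $q_i = \max(d_i^* - b_i, 0)$ is witnessed by $D^*$ itself and has weight exactly $\delta(D^*, \s)$.

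It therefore suffices to compute a minimum-weight expansion in $O(n \log n)$ time. I would begin by invoking Theorem~\ref{theorem-our-result-1} on the two trivially-realizable relaxations $\s^{U} = ([a_i, n-1])_i$ and $\s^{L} = ([0, b_i])_i$, obtaining the sparsest realization $D^U$ of $\s^U$ and the densest realization $D^L$ of $\s^L$. Any realization of $\s$ can be monotonically ``pushed'' toward these extremes, so the coordinates where $d^U_i > b_i$ pinpoint vertices forced to exceed $b_i$ even under the sparsest feasible edge count, while the coordinates where $d^L_i < a_i$ pinpoint vertices forced to fall below $a_i$ even under the densest feasible edge count. The quantities $E^U := \sum_i \max(d^U_i - b_i, 0)$ and $E^L := \sum_i \max(a_i - d^L_i, 0)$ are natural lower bounds on the weight of any valid expansion, and I would next argue that at least one of them is attained by an optimal expansion.

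The structural claim I would prove is that some minimum-weight expansion is \emph{one-sided}: either a purely downward expansion of the $a_i$'s of weight $E^U$, or a purely upward expansion of the $b_i$'s of weight $E^L$. This leans on the form of Cai et al.'s characterization — non-realizability is certified by a single violated inequality, and a violated inequality of that form can always be repaired by relaxing constraints on only one ``side'' of the partition it uses. Assuming this lemma, I would take the cheaper of $E^U$ and $E^L$, invoke Theorem~\ref{theorem-our-result-1} on the corresponding one-sided expansion $\hat{\s}$ (with the sparsest or densest variant picked so that the returned $D$ actually hugs the original bounds $a_i, b_i$), and output the resulting graphic sequence as the deviation-minimizing $D$. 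The overall running time is $O(n \log n)$ because Theorem~\ref{theorem-our-result-1} is called a constant number of times.

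The main obstacle is rigorously establishing the one-sidedness lemma, together with handling the global parity constraint (the total degree must be even), which may force an $O(1)$ bidirectional correction of size $1$. Without one-sidedness, one would have to search over genuinely two-sided expansions, which risks breaking the $O(n \log n)$ budget. A secondary subtlety is ensuring that Theorem~\ref{theorem-our-result-1} applied to the expanded $\hat{\s}$ returns a $D$ that \emph{exactly} attains deviation $w(\hat{\s})$ rather than strictly less (which would be even better) or more; the sparsest/densest variants of Theorem~\ref{theorem-our-result-1} let us pick the extreme realization on the side of the expansion, making each $d_i$ saturate the expanded bound precisely where needed and preserving the original $a_i, b_i$ elsewhere.
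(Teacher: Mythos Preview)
Your reduction of minimum deviation to minimum-weight expansion is sound and matches the paper's framing. However, there is a genuine gap in the core step.

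First, a sign error: $E^U$ and $E^L$ are \emph{upper} bounds on the minimum expansion weight, not lower bounds. Indeed, $D^U$ itself witnesses a realizable upward expansion of weight exactly $E^U$, so $\min w(\hat\s) \le E^U$; likewise for $E^L$. The direction you need---that one of them is \emph{attained}---is the nontrivial one, and your sketch does not establish it.

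Second, and more seriously, the sparsest realization $D^U$ of $(A,\top)$ returned by Theorem~\ref{theorem-our-result-1} does \emph{not} in general minimize $\delta_U(\cdot,\s)$ over graphic $D\ge A$. That sparsest realization is a \emph{levelled} sequence with respect to $(A,\top)$: it distributes the excess degree as evenly as possible among the free coordinates, ignoring the $b_i$'s entirely. But $\delta_U$ is minimized by concentrating that excess on coordinates with large $b_i$. Concretely, take $n=6$, $A=(4,4,4,4,0,0)$, $B=(4,4,4,4,3,0)$. One checks that $(A,B)$ is non-realizable (the Erd\H{o}s--Gallai inequality at $k=4$ forces $d_5\ge 4$), and that $(4,4,4,4,3,1)$ is graphic with deviation $1$, so the minimum deviation is $1$. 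Yet the sparsest levelled graphic sequence in $(A,\top)$ is $D^U=(4,4,4,4,2,2)$, giving $E^U=2$. In this particular instance $E^L$ happens to equal $1$, but your argument gives no reason why at least one side must be tight; the ``forced to exceed $b_i$'' intuition is precisely what fails, since the excess at coordinate $6$ is an artefact of levelling, not a genuine constraint.

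The paper's route is different and avoids this pitfall. It first proves directly (Lemma~\ref{lemma:upper_range_suffices}) that some deviation-optimal $D$ satisfies $D\ge A$, so one-sidedness is automatic and always upward---no either/or is needed. It then binary-searches not over realizations in $(A,\top)$ but over levelled \emph{upper bounds} $R$ in the interval $(B,\top)$, seeking the smallest-volume $R$ for which $(A,R)$ is realizable; any graphic certificate for $(A,R)$ then achieves the minimum deviation (Lemma~\ref{lemma:R_significance}). The crucial point is that levelling is applied to the \emph{expansion} $R$, where it preserves the objective $\textsum(R-B)$, rather than to the \emph{realization} $D$, where levelling with respect to $(A,\top)$ can strictly increase $\delta_U$.
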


Our new approach enables us to tackle also an optimization version of
the problem in which it is required to compute the ``most regular''
sequence realizing the given interval sequence $\s$, using the natural
measure of the minimum sum of pairwise degree differences, $\sum_{i,j}
|d_i-d_j|$, as our regularity measure.  To the best of our knowledge,
this problem was not studied before and is not dealt with directly by
the existing approaches to the interval sequence problem. 
Specifically, we obtain the following.

\begin{theorem}\label{theorem-our-result-2}
There exists an algorithm that for any integer $n\geq 1$ and any length $n$
realizable interval sequence $\s$, computes the most regular graphic sequence
realizing interval sequence $\s$
(i.e., the one minimizing the sum of pairwise degree difference),
in time $O(n^2)$.
\end{theorem}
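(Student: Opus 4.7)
The starting identity $\sum_{i,j}|d_i-d_j| = 2\sum_{k=1}^n (2k-n-1)\,d_{(k)}$, where $d_{(1)}\le\cdots\le d_{(n)}$ is the sorted rearrangement of $D$, shows that the objective is Schur-convex: transferring one unit of degree from an index with strictly larger value to one with strictly smaller value (whenever the swap respects both the interval bounds and graphicality) strictly decreases the objective. The plan exploits this monotonicity together with an enumeration over the extremal values of the optimum. If $D^{*}$ is an optimal sequence with $L^{*}=\min_i d^{*}_i$ and $U^{*}=\max_i d^{*}_i$, then $D^{*}$ realizes the tightened interval sequence
\[
\s_{L^{*},U^{*}} \;=\; \bigl([\max(a_i,L^{*}),\,\min(b_i,U^{*})]\bigr)_{i=1}^n,
\]
and every realization of $\s_{L^{*},U^{*}}$ is a realization of $\s$ with no larger objective. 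Hence it suffices to enumerate pairs $(L,U)$ and, for each feasible pair, return the most regular realization of $\s_{L,U}$.

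To perform this enumeration in $O(n^2)$ time, I would iterate $L$ from $0$ upward while maintaining the smallest $U(L)$ for which $\s_{L,U(L)}$ is realizable, tested via the Cai et al.\ characterization already used by Theorem~\ref{theorem-our-result-1}. Since $L\mapsto U(L)$ is monotonically non-decreasing, only $O(n)$ pairs arise along the Pareto frontier. For each such pair $(L,U)$ I would invoke the realization subroutine of Theorem~\ref{theorem-our-result-1} to produce an initial certificate $D_{L,U}$ of $\s_{L,U}$, and then ``tighten toward the centre'' by repeatedly picking the current maximum-degree and minimum-degree indices and applying a unit transfer between them whenever admissible. Each transfer strictly decreases the objective by at least $4$ (a direct pair-contribution count yields a drop of $4+4\cdot|\{k:d_i<d_k<d_j\}|$ when shifting one unit from $j$ to $i$), so the loop terminates; a matroid-style exchange argument on the lattice of graphic sequences inside $\s_{L,U}$ identifies the fixed point with the global optimum over that restriction.

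The dominant costs are the $O(n)$ feasibility tests, the $O(n)$ invocations of the realization subroutine (replaced in practice by incremental updates between consecutive frontier pairs rather than rebuilding from scratch), and the objective evaluations. By maintaining a single sorted view of the current degree sequence incrementally across the sweep, each pair on the frontier contributes $O(n)$ amortized work, yielding the claimed $O(n^2)$ bound. The main obstacle will be establishing the lattice/exchange lemma that equates local optimality (no admissible unit transfer exists) with global optimality of the Schur-convex objective over the feasible graphic sequences inside a box: proving this requires showing that the set of graphic sequences contained in a given axis-parallel box is closed under majorizing unit transfers, so that the chain of transfers from any feasible sequence toward the optimum can be performed one step at a time without ever leaving feasibility, and that no ``simultaneous'' 3- or 4-index exchange is ever strictly better than a sequence of admissible pairwise transfers.
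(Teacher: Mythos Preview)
Your proposal has two genuine gaps. First, the sentence ``every realization of $\s_{L^{*},U^{*}}$ is a realization of $\s$ with no larger objective'' is false as written: take $n=4$ with $D^{*}=(1,2,2,3)$ and $D=(1,1,3,3)$, both in $[1,3]^4$, yet $\phi(D)=8>6=\phi(D^{*})$. More importantly, the restriction to the Pareto frontier $\{(L,U(L))\}$ is not justified. You need that the most regular realization of $\s_{L^{*},U(L^{*})}$ is at least as regular as $D^{*}$, but since the realizations of $\s_{L^{*},U(L^{*})}$ form a \emph{subset} of those of $\s_{L^{*},U^{*}}$, the minimum spread over the subset can only be larger. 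Without this step you are forced to enumerate all $O(n^2)$ pairs $(L,U)$, and then your inner ``tighten toward the centre'' loop---whose termination bound is $O(n^3)$ transfers, since $\phi$ is $O(n^3)$ and each transfer drops it by $2$ (not $4$; your pair-contribution count double-counts)---blows the budget. The hand-wave ``incremental updates between consecutive frontier pairs'' is doing essentially all the algorithmic work and is nowhere near a proof of $O(n^2)$.

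The paper takes a very different route that sidesteps both issues. Lemmas~\ref{lemma:spread} and~\ref{lemma:redistribute} already give exactly the local-equals-global statement you flag as the main obstacle: the optimum must be a \emph{levelled} sequence with respect to $\s$. Theorem~\ref{theorem:characterisation_levelled_sequence} then shows that every levelled sequence is determined (up to similarity) by a single real ``water level'' $\ell$: outside the band $(\lfloor\ell\rfloor,\lceil\ell\rceil)$ the entries are pinned to $a_i$ or $b_i$, and inside the band they take only the two values $\lfloor\ell\rfloor,\lceil\ell\rceil$. This collapses the search to a one-parameter family. For each integer $z$ the candidates are $D_z+E_{[\alpha+1,i]}$ for $i\in[\alpha,\beta]$, and Lemma~\ref{lemma:avd_contiguous} shows that the set of $i$ violating the $k$th Erd\H os--Gallai inequality is a contiguous interval computable in $O(1)$ time, so all $O(n)$ candidates for a fixed $z$ are tested for graphicality in $O(n)$ total; Lemma~\ref{lemma:phi_relation} gives each spread in $O(1)$. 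Summing over the $O(n)$ values of $z$ yields $O(n^2)$ cleanly. Your $(L,U)$-box enumeration never exploits this one-parameter structure, which is why the running time does not close.
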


The tools developed in this paper allows us to study other interesting applications, such as computing a minimum
extension of non-graphic sequences to graphic ones (see Section~\ref{section:application}).

\subparagraph*{Related work.}
Kleitman and Wang~\cite{KleitmanW73}, and Fulkerson-Chen-Anstee~\cite{Anstee1982,Chen1966,Fulkerson1960}
solved the problem of degree realization for directed graphs,
wherein, for each vertex both the in-degree and out-degree is specified.
In \cite{HartungN15}, Nichterlein and Hartung proved the NP-completeness 
of the problem when the additional constraint of acyclicity is imposed.
Over the years, various extensions of the degree realization problems
were studied as well, cf. \cite{AT94,WK73}. 
The {\em Subgraph Realization problem} considers the restriction that the realizing graph 
must be a subgraph ({\em factor}) of some fixed input graph.
For an interesting line of work on graph factors, refer to~\cite{Tutte1981,ANSTEE:1985,Hell:1990,guo:hal-01179211}.
The subgraph realization problems are generally harder. For instance,
it is very easy to compute an $n$-vertex connected graph whose degree sequence consists of all values $2$,
however, the same problem for subgraph-realization is NP-hard (since it reduces to Hamiltonian-cycle problem).

Lesniak~\cite{Lesniak1975} provided a characterization for the sequence of eccentricities of an
$n$-vertex graph. Behzad et al.~\cite{Behzad1976}
studied the problem of characterizing the set  comprising of vertex-eccentricity values of general graphs
(the sequence problem remains open). 
Fujishige et al.~\cite{FujishigeP01} considered the problem of realizing graphs and hypergraphs with given cut specifications.
Realization problems related to various criteria of relative satisfaction are considered in~\cite{BCPR19happiness}. 
Several other realization problems are surveyed in~\cite{BCPR18survey}.

\subparagraph*{Organization of the Paper.}
In Section~\ref{section:prelim}, we present the notation and definitions.
In Section~\ref{section:tools}, we discuss the main ideas and tools
that help us to construct graph certificates for interval sequence problem.
Section~\ref{section:algo1} presents our $O(n\log n)$ time algorithm 
for computing graphic certificate with minimum deviation.
Section~\ref{section:algo2} provides a quadratic-time algorithm
for computing the most regular certificate.
We discuss the applications in Section~\ref{section:application}.
The appendix includes linear time verification algorithms implied by the 
work of Erd\"os and Gallai~\cite{EG60}, and Cai et al.~\cite{CDZ:00}.

\section{Preliminaries}
\label{section:prelim}

A \emph{sequence} is defined to be an $n$-element vector whose entries
are non-negative integers. For any sequence $D = (d_1,\ldots,d_n)$,
define $\min(D)=\min_{i=1}^n \{d_i\}$, $\max(D) = \max_{i=1}^n
\{d_i\}$, $\textsum(D) = \sum_{i=1}^n d_i$, and
$\parity(D)~=~\textsum(D)\mod 2$.  Given any two sequences $X =
(x_1,\ldots,x_n)$ and $Y=(y_1,\ldots,y_n)$, we say that $X \leq Y$ if
$x_i \leq y_i$ for $1\leq i\leq n$.  Any two sequences $X$ and~$Y$ are
said to be \emph{similar} if they are identical up to permutation of
the elements (i.e., their sorted versions are identical).  A sequence
$D$ is said to \emph{lie} in an interval-sequence $(A,B)$, denoted by
$D \in(A,B)$, if $A \leq D \leq B$.  We define
$\min(X,Y) = (\min\{x_1,y_1\}, \ldots, \min\{x_n,y_n\})$, and
$\max(X,Y) = (\max\{x_1,y_1\}, \ldots, \max\{x_n,y_n\})$.
The \emph{$\l1$}-distance of the pair $(X,Y)$ is defined as
$\l1(X,Y) ~=~ \sum_{i=1}^n|y_i-x_i|$.

Denote by $\top$ and $\bot$ the $n$-length sequences all whose
entries are respectively $n-1$ and $0$.  Given a sequence $D =
(d_1,\ldots,d_n)$ and an integer $k\in[1,n]$, define the vectors
$X(D)$ and $Y(D)$ by setting~for~$1 \le k \le n$:
$$~~~~X_k(D)  \triangleq \sum_{i=1}^k d_i,~~~ \text{and} ~~~
Y_k(D)  \triangleq k(k-1) + \sum_{i=k+1}^n \min(d_i,k)~.$$
%
%
For any sequence $D=(d_1,\ldots,d_n)$, the \emph{spread} of $D$ is
defined as $\phi(D)~=~\sum_{1\leq r<s\leq n} |d_r-d_s|,$ and it always
lies in the range $[0,n^3]$.  A sequence $D$ is said to be more
\emph{regular} than another sequence $D'$ if $\phi(D) < \phi(D')$.
For any two integers $x\leq y$,  $[x,y] =\{x,x+1,\ldots,y\}$.
For any $I \subseteq [1,n]$, define $D[I]$ to be the subsequence of
$D$ consisting of elements $d_i$, for $i\in I$; and define $E_{I}$ to
be the \emph{characteristic vector} of $I$, namely, the sequence
$(e_1,e_2,\ldots,e_n)$ such that $e_i=1$ if $i\in I$, and $e_i=0$
otherwise.
For any sequence $D=(d_1,\ldots,d_n)$ and an interval-sequence 
$\s =([a_1,b_1],\ldots,[a_n,b_n])$, the upper and lower deviation of $D$, 
is respectively defined as
$$\delta_U(D,\s ) =\sum_{i=1}^n \max\{0,(d_i-b_i)\}, ~~ \text{ and } ~~
\delta_L(D,\s ) = \sum_{i=1}^n \max\{0,(a_i-d_i)\}~.$$
%
%
The \emph{deviation} of $D$ is defined as $\delta(D,\s ) =
\delta_U(D,\s ) + \delta_L(D,\s )$.
For any vertex $x$ in an undirected simple graph $H$, define
$\deg_H(x)$ to be the degree of $x$ in $H$, and define $N_H(x) =
\{y~|~(x,y) \in E(H)\}$ to be the neighbourhood of $x$ in $H$.


We next state the Erd\"os and Gallai~\cite{EG60} characterisation for
realizable (graphic) sequences, and Cai et al.~\cite{CDZ:00} characterisation for
realizable interval sequences. An $O(n)$-time implementation of the both theorems is provided
in the Appendix~\ref{section:EG_CDZ_implementation}.

\begin{theorem}[Erd\"os and Gallai~\cite{EG60}]
\label{theorem:EG}
A non-increasing sequence $D=(d_1,\ldots,d_n)$ is graphic if and only if
\begin{inparaenum}[(i)]
\item $X_n(D)$ is even, and
\item $X(D)\leq Y(D)$.
\end{inparaenum}
\end{theorem}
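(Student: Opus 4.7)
\emph{Necessity.} Both conditions follow from elementary edge-counting. Condition (i) is immediate from the handshake lemma: in any graph $G$ realizing $D$, we have $X_n(D)=\sum_i d_i = 2|E(G)|$, which is even. For (ii), fix $k\in[1,n]$ and let $S=\{v_1,\ldots,v_k\}$ be the vertices of the $k$ largest degrees. Partition the edges contributing to $X_k(D)=\sum_{i=1}^k d_i$ into those with both endpoints in $S$ and those with exactly one endpoint in $S$. The first group contributes at most $2\binom{k}{2}=k(k-1)$, while each $v_i$ with $i>k$ contributes at most $\min(d_i,k)$ endpoints in $S$ (it has $d_i$ neighbors total, of which at most $k$ can lie in $S$). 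Summing gives $X_k(D)\le k(k-1)+\sum_{i=k+1}^n\min(d_i,k)=Y_k(D)$.

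\emph{Sufficiency.} I would proceed by induction on $n$, with $n\le 2$ as trivial base cases, via the Havel--Hakimi reduction. Given non-increasing $D$ satisfying (i) and (ii), define $D'$ to be the length-$(n-1)$ sequence obtained by deleting $d_1$ and decrementing each of $d_2,\ldots,d_{d_1+1}$ by one, then re-sorting in non-increasing order. Assuming $D'$ also satisfies (i) and (ii), the inductive hypothesis yields a graph $H$ realizing $D'$; adding a new vertex adjacent to the images of positions $2,\ldots,d_1+1$ then produces a graph realizing $D$. Parity (i) for $D'$ is clear since $\textsum(D')=\textsum(D)-2d_1$.

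\emph{Main obstacle.} The heart of the argument is showing that (ii) is preserved under the Havel--Hakimi reduction. For each $k'\in[1,n-1]$, one must bound $X_{k'}(D')$ against $Y_{k'}(D')$, which is delicate because the re-sorting step can permute where the decrements land. I would split into cases according to whether $k'<d_1$, $k'=d_1$, or $k'>d_1$, and within each case track how many of the decremented positions fall inside the first $k'$ entries of the re-sorted $D'$. The useful observation is that re-sorting a sequence only redistributes mass in a way that cannot increase any prefix sum above what the ``worst'' (most-top-heavy) placement gives, so it suffices to bound the prefix sums before re-sorting. One then verifies that the $Y_{k'}$ side of the inequality absorbs at least as much of the change as the $X_{k'}$ side, typically with slack coming from the $\min(d_i,k)$ terms or from the $k(k-1)$ term when $k'$ crosses the $d_1$ threshold.

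As an alternative route (closer to the original \cite{EG60}), one can argue directly by an edge-swap/switching argument: realize $D$ first as a multigraph or as a bipartite incidence structure using only (i), and then eliminate any violation of simplicity by $2$-switches that preserve degrees; (ii) is what guarantees that a swap partner always exists. Either way, the combinatorial bottleneck is the same: converting the numerical inequality $X_k(D)\le Y_k(D)$ into the structural guarantee that the obstruction can be resolved while maintaining the remaining constraints.
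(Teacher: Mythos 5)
This statement is a classical theorem that the paper cites from Erd\H{o}s--Gallai \cite{EG60} without proof (the appendix only gives an $O(n)$ evaluation of the vectors $X(D)$ and $Y(D)$), so there is no in-paper argument to compare against; I will assess your proposal on its own terms.

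Your necessity direction is complete and correct: the handshake lemma gives (i), and the partition of the edge-endpoints counted by $X_k(D)$ into those internal to the top-$k$ set (at most $k(k-1)$) and those crossing to a vertex $v_i$, $i>k$ (at most $\min(d_i,k)$ each) gives (ii).

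The sufficiency direction, however, has a genuine gap, and it sits exactly at the theorem's combinatorial core. Your induction is sound in outline only if the Havel--Hakimi reduction $D\mapsto D'$ preserves condition (ii), and you explicitly defer this (``assuming $D'$ also satisfies (i) and (ii)''; ``one then verifies that\dots''). That verification is the entire hard content of the theorem and cannot be waved through: for each $k'$ one must compare $X_{k'}(D')$, computed on the \emph{re-sorted} sequence, against $Y_{k'}(D')$, whose $\min(d_i,k')$ terms also shift when entries near the threshold $k'$ are decremented. Moreover, your ``useful observation'' is stated backwards: re-sorting into non-increasing order \emph{maximizes} every prefix sum over all orderings of the decremented multiset, so bounding the prefix sums before re-sorting does not suffice; one must bound the sum of the $k'$ largest entries, which is why the standard treatments instead decrement the \emph{last} positions within each block of equal values so that no re-sorting is needed. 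Without carrying out this case analysis ($k'<d_1$, $k'=d_1$, $k'>d_1$, and the interaction with the block boundaries), the proof is not established. If you want a complete and short constructive argument, the route of Tripathi, Venugopalan and West \cite{TripathiVW:10} (cited in this paper), which grows a subrealization greedily rather than inducting through Havel--Hakimi, avoids this bookkeeping entirely; alternatively, prove the Havel--Hakimi equivalence by a $2$-switch first and then supply the prefix-sum preservation lemma in full.
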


\begin{theorem}[Cai et al.~\cite{CDZ:00}]
\label{theorem:CDZ}
Let $\s =([a_1,b_1],\ldots,[a_n,b_n])=(A,B)$ be an interval-sequence
such that $A$ is non-increasing and for any index $1 \leq i < n$,
$b_{i+1} \leq b_{i}$ whenever $a_i=a_{i+1}$.  For each $k \in [1,n]$,
define $W_k(\s) = \{i \in [k+1,n] ~|~ b_i \geq k+1\}$.  Then $\s$ is
realizable if and only if $X(A) \leq Y(B)-\varepsilon(\s)$, where,
$\varepsilon(\s)$ is defined by setting
$$
\varepsilon_k(\s ) =
\begin{cases}
1 & \text{if } a_i=b_i \text{ for } i\in W_k(\s ) \text{ and } \sum_{i\in W_k(\s )} (b_i+k|W_k(\s )|) \text{ is odd},\\
0 & \text{otherwise.}
\end{cases}
$$
\end{theorem}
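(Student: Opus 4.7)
I would prove the biconditional in two directions: necessity from Theorem~\ref{theorem:EG}, and sufficiency from Lov\'asz's $(g,f)$-factor theorem (as suggested in the introduction).

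\emph{Necessity.} Suppose $\s$ is realized by a graph with degree sequence $D$, and let $D'$ denote the non-increasing rearrangement of $D$. Theorem~\ref{theorem:EG} applied to $D'$ gives $X_k(D')\le Y_k(D')$ and $X_n(D')$ even. Two sorting observations then bracket the bounds in the original order: $X_k(A)\le X_k(D)\le X_k(D')$, since $A\le D$ entrywise and the first $k$ entries of $D$ sum to at most the $k$ largest; and symmetrically $Y_k(D')\le Y_k(D)\le Y_k(B)$, since in $D'$ the tail sums $\min(\cdot,k)$ over the smallest $n-k$ entries of $D$, and $\min(d_i,k)\le\min(b_i,k)$ entrywise. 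Chaining gives $X_k(A)\le Y_k(B)$. To sharpen by $\varepsilon_k(\s)$, I would use that when every $i\in W_k(\s)$ satisfies $a_i=b_i$, the degrees on $W_k(\s)$ are pinned by $B$, so the parity of $Y_k(D')-X_k(D')$ is fully determined by the quantity $\sum_{i\in W_k(\s)}(b_i+k|W_k(\s)|)$ together with the forced evenness of $X_n(D')$; consequently, when that quantity is odd the two sides of $X_k(D')\le Y_k(D')$ have opposite parities, forcing a strict improvement of~$1$.

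\emph{Sufficiency.} For the converse I would apply Lov\'asz's $(g,f)$-factor theorem to the host graph $K_n$ with $g(i)=a_i$ and $f(i)=b_i$, so that a $(g,f)$-factor of $K_n$ is precisely a realization of $\s$. Lov\'asz's theorem yields existence under a quantified inequality over all disjoint pairs $(S,T)\subseteq[1,n]$, together with an accompanying parity clause. The main technical step, and the hardest part of the proof, is a reduction showing that under the monotonicity hypotheses the extremal pair $(S,T)$ may always be taken as $S=[1,k]$ and $T=W_k(\s)$ for some $k$, collapsing Lov\'asz's family of conditions to the single family $X_k(A)\le Y_k(B)-\varepsilon_k(\s)$ for $k=1,\ldots,n$.

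\emph{Main obstacle.} The hardest step is this prefix reduction for Lov\'asz's condition. I would attack it by an exchange argument: starting from any hypothetical worst pair $(S,T)$, swap an index of $[1,n]\setminus S$ into $S$ in the order of non-increasing $a_i$, and correspondingly pull indices into $T$ in the order of non-increasing $b_i$ (subject to the threshold $b_i\ge k+1$), showing at each step that the Lov\'asz gap only worsens; the tie-breaking hypothesis ``$b_{i+1}\le b_i$ whenever $a_i=a_{i+1}$'' is exactly what is needed to keep the swaps monotone. The parity clause of Lov\'asz's theorem then aligns with the definition of $\varepsilon_k(\s)$ because, for $i\in W_k(\s)$ with $a_i=b_i$, the contribution modulo~$2$ is precisely $b_i+k$, summing to the quantity appearing in the definition of $\varepsilon_k$. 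Once the prefix reduction is in hand, both directions combine to yield the stated biconditional.
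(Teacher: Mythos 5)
This theorem is not proved in the paper at all: it is imported verbatim from Cai et al.~\cite{CDZ:00} (with Garg et al.~\cite{GargGT:11} cited for a constructive proof), so there is no in-paper argument to compare against. Your high-level architecture --- necessity via Erd\H{o}s--Gallai applied to the sorted realization plus a parity obstruction, sufficiency via Lov\'asz's $(g,f)$-factor theorem on $K_n$ with a reduction of the $(S,T)$-quantified condition to prefix pairs --- is indeed the route taken in the cited sources, and the unsharpened inequality $X_k(A)\le Y_k(B)$ follows correctly from your chain $X_k(A)\le X_k(D)\le X_k(D')\le Y_k(D')\le Y_k(D)\le Y_k(B)$.

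There are, however, two genuine gaps. First, in the necessity of the $\varepsilon_k$ term, your stated mechanism is not right as written: the parity of $Y_k(D')-X_k(D')$ is \emph{not} determined by the degrees on $W_k(\s)$ alone, since vertices $i>k$ with $b_i\le k$ (and the first $k$ vertices) still have unpinned degrees that enter $Y_k(D')$ and $X_k(D')$. The correct argument is a contradiction on full tightness: if $X_k(A)=Y_k(B)$ then every inequality in your chain is an equality, which forces $d_i=a_i$ for $i\le k$, $d_i=b_i$ for $i>k$ with $i\notin W_k(\s)$, and $d_i=a_i=b_i$ for $i\in W_k(\s)$; only then is $\textsum(D)\equiv k|W_k(\s)|+\sum_{i\in W_k(\s)}b_i \pmod 2$ forced, contradicting evenness when the displayed quantity is odd. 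Second, and more seriously, the sufficiency direction is only a plan: the reduction showing that the extremal pair in Lov\'asz's condition can be taken as $S=[1,k]$, $T=W_k(\s)$ is the entire content of the theorem, and your exchange argument is not carried out (nor is the matching of Lov\'asz's component-counting parity term $q(S,T)$ with $\varepsilon_k(\s)$, which is delicate precisely because $q$ counts components of the complement structure where $g=f$). As it stands the proposal identifies the right skeleton but does not constitute a proof.
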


\section{Main Tools}
\label{section:tools}

In this section, we develop some crucial tools that help us in
efficient computation of certificate for a realizable
interval-sequence. These tools will help us to search a graphic
sequence in $O(n\log n)$ time using a clever divide and conquor
methodology.  Also they aid in searching for the maximally regular
sequence in just quadratic time.

\subparagraph{Levelling operation.}

Given a sequence $D=(d_1,\ldots,d_n)$ and a pair of indices $\ia \neq
\ib$ satisfying $d_\ia > d_\ib$, we define $\pi(D,\ia,\ib) = D^* =
(d^*_1,\ldots,d^*_n)$ to be a sequence obtained from $D$ by
decrementing $d_\ia$ by $1$ and incrementing $d_\ib$ by 1 (i.e.,
$d^*_\ia = d_\ia-1$, $d^*_\ib = d_\ib+1$, and $d^*_k=d_k$ for $k\neq
\ia,\ib$).  This operation is called the \emph{levelling operation} on
$D$ for the indices $\ia$ and $\ib$.  The operation essentially
``levels'' (or ``flattens'') the sequence $D$, making it more uniform.

We now discuss some properties of levelling operations.


\begin{lemma}
Any levelling operation on a sequence $D$ that results in a
non-similar sequence, reduces its spread $\phi(D)$ by a value at least two.
\label{lemma:spread}
\end{lemma}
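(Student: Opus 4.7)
The plan is to track the change in spread term by term. Write $\phi(D) - \phi(D^*) = \sum_{r<s}\left(|d_r-d_s| - |d^*_r - d^*_s|\right)$ and observe that only pairs $(r,s)$ with $\{r,s\} \cap \{\alpha,\beta\} \neq \emptyset$ contribute, since for all other pairs $d_r = d^*_r$ and $d_s = d^*_s$. Split the contributing terms into the single pair $(\alpha,\beta)$ and, for each $k \neq \alpha,\beta$, the pair $|d_\alpha - d_k| + |d_\beta - d_k|$ versus $|d_\alpha - 1 - d_k| + |d_\beta + 1 - d_k|$.

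For the individual index $k$, I would do a small case analysis according to where $d_k$ sits relative to $d_\alpha$ and $d_\beta$:
\begin{itemize}
\item If $d_k \ge d_\alpha$ or $d_k \le d_\beta$, then decrementing $d_\alpha$ and incrementing $d_\beta$ move them closer to and farther from $d_k$ in equal and opposite amounts, so the contribution is unchanged.
\item If $d_\beta < d_k < d_\alpha$ strictly (which, since entries are integers, means $d_k \in [d_\beta+1, d_\alpha-1]$), both $|d_\alpha - d_k|$ and $|d_\beta - d_k|$ drop by exactly $1$, so the contribution of $k$ drops by $2$.
\item The boundary subcases $d_k = d_\alpha$ or $d_k = d_\beta$ are verified to give a net change of $0$ (one term goes up by $1$, the other down by $1$).
\end{itemize}
Thus every index $k \neq \alpha,\beta$ contributes a non-negative amount to $\phi(D) - \phi(D^*)$.

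For the pair $(\alpha,\beta)$ itself, $|d_\alpha - d_\beta|$ becomes $|d_\alpha - d_\beta - 2|$. If $d_\alpha - d_\beta \ge 2$, this contributes exactly $2$ to the drop; if $d_\alpha - d_\beta = 1$, it contributes $0$. Combining with the previous step, $\phi$ drops by at least $2$ whenever $d_\alpha - d_\beta \ge 2$.

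The remaining case is $d_\alpha - d_\beta = 1$, and this is where I would close the argument: in that case $d^*_\alpha = d_\alpha - 1 = d_\beta$ and $d^*_\beta = d_\beta + 1 = d_\alpha$, while $d^*_k = d_k$ for every other $k$, so $D^*$ is simply obtained from $D$ by swapping the entries at positions $\alpha$ and $\beta$. Hence the sorted versions of $D$ and $D^*$ coincide, meaning $D^*$ is similar to $D$. The contrapositive gives exactly the statement: if the levelling produces a non-similar sequence, then $d_\alpha - d_\beta \ge 2$, and so $\phi(D) - \phi(D^*) \ge 2$. The main potential pitfall is just being careful with the boundary subcases $d_k \in \{d_\alpha, d_\beta\}$, where one has to check that the two unit changes cancel rather than reinforce each other.
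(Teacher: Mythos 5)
Your proof is correct and follows essentially the same route as the paper's: decompose the spread into the $(\alpha,\beta)$ pair (which drops by exactly $2$ when $d_\alpha-d_\beta\ge 2$), the pairs involving exactly one of $\alpha,\beta$ (which never increase, by the same three-way case analysis on where $d_k$ sits), and the untouched pairs, while the case $d_\alpha-d_\beta=1$ is dispatched by noting the operation is just a swap and hence yields a similar sequence. No gaps.
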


\begin{proof}
Let $D=(d_1,d_2,\ldots,d_n)$ and $Z=(z_1,\ldots,z_n)=\pi(D,\ia,\ib)$,
be a sequence obtained from $D$ by performing a levelling operation on
a pair of indices $\ia,\ib$ such that $d_\ia>d_\ib$.  If
$d_\ia=d_\ib+1$, then it is easy to verify that $D$ and $Z$ are
similar.  If $d_\ia\geq d_\ib+2$, then
\begin{eqnarray*}
\phi(Z)
& =    & |z_\ia-z_\ib| + \sum_{s\neq \ia,\ib} (|z_\ia-z_s|+|z_\ib-z_s|)
	 + \sum_{\substack{1\leq r<s\leq n,\\r,s\notin\{\ia,\ib\}}} |z_r-z_s|\\
& =    & |d_\ia-d_\ib|-2
         +\sum_{\substack{s \neq \ia, \ib~\mbox{\scriptsize s.t.}\\
                d_s\notin(d_\ib,d_\ia)}} (|d_\ia-d_s|+|d_\ib-d_s|) \\
&      & +\sum_{\substack{s \neq \ia, \\\ib~\mbox{\scriptsize s.t.}
                d_s\in(d_\ib,d_\ia)}} (|d_\ia-d_s|+|d_\ib-d_s|-2)
	  +\sum_{\substack{1\leq r<s\leq n,\\r,s\notin\{\ia,\ib\}}}|d_r-d_s|\\
& \leq &\Big(\sum_{1\leq r<s\leq n}|d_r-d_s|\Big)-2~=~ \phi(D)-2
~.
\end{eqnarray*}
Thus, the claim follows.
\end{proof}


\begin{lemma}[Corollary 3.1.4, \cite{mahadev1995threshold}]
The levelling operations preserves graphicity, that is, if we perform
a levelling operation on a graphic sequence, then the resulting
sequence is also graphic.
\label{lemma:redistribute}
\end{lemma}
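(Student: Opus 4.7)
The plan is to start with a graph $G$ realizing $D$ and modify it by a single local edge swap to obtain a graph $G^*$ realizing $D^* = \pi(D,\ia,\ib)$. Concretely, I need to pull one edge incident to $\ia$ away from $\ia$ and re-attach it to $\ib$ without disturbing the degree of any third vertex.

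First I would define $A = N_G(\ia) \setminus \{\ib\}$ and $B = N_G(\ib) \setminus \{\ia\}$, and observe that $|A|-|B| = d_\ia-d_\ib > 0$ in either case, because if $\ia\sim\ib$ in $G$ then $|A|=d_\ia-1$ and $|B|=d_\ib-1$, while if $\ia\not\sim\ib$ then $|A|=d_\ia$ and $|B|=d_\ib$. Since $|A\setminus B|\ge |A|-|B|>0$, I can pick some vertex $c\in A\setminus B$. By construction $c\neq \ia,\ib$, the edge $(\ia,c)$ lies in $E(G)$, and the non-edge $(\ib,c)$ does not.

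Next I would let $G^*$ be the graph obtained from $G$ by removing the edge $(\ia,c)$ and inserting the edge $(\ib,c)$; this is a legal simple-graph modification by the properties of $c$. A direct check shows $\deg_{G^*}(\ia)=d_\ia-1$, $\deg_{G^*}(\ib)=d_\ib+1$, $\deg_{G^*}(c)=\deg_G(c)$, and every other vertex retains its degree. Hence $G^*$ realizes $D^*$, proving that $D^*$ is graphic.

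The argument is essentially a one-step edge-swap (a degree-preserving ``2-switch'' variant), and the only delicate point is ensuring that the swap vertex $c$ exists and is distinct from both $\ia$ and $\ib$; the counting inequality $|A\setminus B|>0$ handles this uniformly and is the main (and only) thing that needs care.
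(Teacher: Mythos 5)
Your proof is correct and follows essentially the same single edge-swap argument as the paper: locate a neighbour of $\ia$ that is neither $\ib$ nor adjacent to $\ib$, and move that edge from $\ia$ to $\ib$. The only difference is that your sharper count $|A\setminus B|\ge |A|-|B|=d_\ia-d_\ib>0$ handles the case $d_\ia=d_\ib+1$ by the same swap, whereas the paper dispatches that case separately by observing that the two sequences are similar (and needs $d_\ia\ge d_\ib+2$ for its existence argument).
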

\begin{proof}
Let $D=(d_1,\ldots,d_n)$ be a graphic sequence, and
$\pi(D,\ia,\ib)=D^*=(d^*_1,\ldots,d^*_n)$ for some indices $\ia,\ib$
satisfying $d_\ia>d_\ib$. If $d_\ia=1+d_\ib$, then $D^*$ is similar to
$D$, and thus also graphic.  So for the rest the proof let us focus on
the case $d_\ia\geq 2+d_\ib$.  Let $G=(V,E)$ be a graph realising the
sequence $D$, and let $x_\ia$ and $x_\ib$ be two vertices in $G$
having degrees respectively $d_\ia$ and $d_\ib$. Since
$|N_G(x_\ia)|\geq 2+|N_G(x_\ib)|$, there must exists at least one
neighbour, say $w$, of vertex $x_\ia$ that does not lie in set
$\{x_\ib\}\cup N_G(x_\ib)$. Let $G^*=(V,E^*)$ be a graph obtained from
$G$ by deleting the edge $(w,x_\ia)$, and adding a new edge
$(w,x_\ib)$. Observe that the degree of all vertices other than
$x_\ia$ and $x_\ib$ are identical in graphs $G$ and $G^*$, also
$\deg_{G^*}(x_\ia) = \deg_{G}(x_\ia)-1$, and $\deg_{G^*}(x_\ib) =
\deg_{G}(x_\ib)+1$.  Therefore $G^*$ is a graph realising the profile
$D^*$, and thus the claim follows.
\end{proof}

\subparagraph*{Levelled sequences.}

A sequence $D$ is said to be \emph{levelled with respect to the
  integer-sequence $\s =(A,B)$} if
\begin{inparaenum}[(i)]
\item $A\leq D\leq B$, and
\item the spread of $D$ cannot be decreased by a levelling operation,
  i.e., for any two indices $\ia\neq \ib$ satisfying $d_\ia>d_\ib$ and
  $A\leq \pi(D,\ia,\ib)\leq B$, we have
  $\phi(\pi(D,\ia,\ib))=\phi(D)$.
\end{inparaenum}
See Figure~\ref{Figure:levelled}.

The \emph{volume} of a sequence $D$ lying between $A$ and $B$ with
respect to $\s = (A,B)$ is defined as
$$
\vol(D,\s) \triangleq \l1 (D,A)
~,
$$
and is invariant of levelling operations applied to $D$.
In other words, applying a levelling operation to a sequence $D$
may reduce its spread but preserves its volume.
Note that the volume lies in the range $[0,~\l1 (A,B)]$.


\begin{lemma}
\label{lemma:making_levelled}
For any $\s =(A,B)$, a sequence $D$ satisfying $A\leq D\leq B$
can be transformed into a levelled sequence $D^*$ having the same
volume $\vol(D,\s)$ by a repeated application of
(at most $O(n^3)$) levelling operations.%
\footnote{We remark that the algorithms presented later on generate a
  desired levelled sequence using more efficient methods than the one
  implicit in the proof, and are therefore faster.  }
\end{lemma}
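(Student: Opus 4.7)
The plan is to prove the lemma by a direct iterative procedure: starting from $D_0 = D$, at each iteration $i$ search for indices $\ia \neq \ib$ with $d_\ia > d_\ib$ such that (a) $A \leq \pi(D_i,\ia,\ib) \leq B$ and (b) $\phi(\pi(D_i,\ia,\ib)) < \phi(D_i)$. If such a pair exists, set $D_{i+1} = \pi(D_i,\ia,\ib)$; otherwise, stop and output $D^* = D_i$. By construction, the stopping condition is exactly condition (ii) of being levelled (no feasible levelling operation strictly decreases $\phi$), while $A \le D_i \le B$ is preserved by the feasibility check in (a), giving condition (i). So whatever sequence the procedure returns is levelled with respect to $\s$.

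For the bound on the number of operations, I would invoke Lemma~\ref{lemma:spread}: when we apply a levelling operation that strictly decreases $\phi$, the result cannot be similar to $D_i$ (otherwise spread is unchanged), so Lemma~\ref{lemma:spread} forces $\phi(D_{i+1}) \le \phi(D_i) - 2$. Since $\phi$ is a non-negative integer bounded above by $n^3$, after at most $n^3/2 = O(n^3)$ iterations the process must terminate.

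For volume invariance, I would observe that a single levelling operation $\pi(\cdot,\ia,\ib)$ decrements one entry by $1$ and increments another by $1$, so $\textsum(D_{i+1}) = \textsum(D_i)$. Since $A \le D_i$ holds at every step, $\vol(D_i,\s) = \l1(D_i,A) = \sum_j (d_j - a_j) = \textsum(D_i) - \textsum(A)$, which is therefore independent of $i$. Hence $\vol(D^*,\s) = \vol(D,\s)$, as required.

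The only place requiring a moment of care is the interaction between the feasibility requirement and the strict-decrease requirement: one must check that we never apply a non-feasible operation nor count a spread-preserving operation against the $O(n^3)$ budget. Both issues are handled by the explicit conditions (a) and (b) in the iteration rule, so no real obstacle arises; the argument is essentially a potential-function termination proof driven by Lemma~\ref{lemma:spread}, combined with the trivial invariant that levelling preserves the sum of the sequence.
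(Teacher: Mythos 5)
Your proof is correct and follows essentially the same route as the paper's: iterate spread-decreasing levelling operations, use Lemma~\ref{lemma:spread} as the potential argument (each effective operation drops $\phi$ by at least $2$, and $\phi\in[0,n^3]$) to bound the number of steps, and observe that levelling preserves the sum and hence the volume. Your version is slightly more explicit about why the stopping condition coincides with the definition of a levelled sequence and why $\vol(D_i,\s)=\textsum(D_i)-\textsum(A)$ is invariant, but these are just details the paper leaves implicit.
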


\begin{proof}
By Lemma~\ref{lemma:spread}, every levelling operation that results in
a new (non-similar) sequence decreases the spread by at least two.
Since the spread of any sequence $D$ is always non-negative and finite
(specifically, $O(n^3)$), it is possible to perform ($O(n^3)$)
levelling operations on $D$ so that the resultant sequence $D^*$ is
levelled.  Since the levelling operation preserves the volume,
$\vol(D^*,\s)$ must be same as $\vol(D,\s)$.
\end{proof}


Any graphic sequence $D$ realizing the interval sequence $\s =(A,B)$
by Lemma~\ref{lemma:making_levelled} can be altered by $O(n^3)$
levelling operations to obtain a levelled sequence lying between
$A$ and $B$.
The resultant sequence by Lemma~\ref{lemma:redistribute} remains
graphic, thus the following theorem is immediate.

\begin{theorem}
\label{theorem:levelled_graphic_sequence}
For any realizable interval sequence $\s =(A,B)$ there exists a
graphic sequence realizing $\s$ which is a levelled sequence.
\end{theorem}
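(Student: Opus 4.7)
The plan is to combine the two preceding lemmas in a direct way. Since $\s = (A,B)$ is realizable, we may pick any graphic sequence $D$ with $A \le D \le B$ and use it as a starting point. The goal is to massage $D$ into a sequence that is simultaneously graphic, levelled, and still contained in $[A,B]$.

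First I would invoke Lemma~\ref{lemma:making_levelled} to produce a levelled sequence $D^{*}$ with $A \le D^{*} \le B$ and $\vol(D^{*},\s) = \vol(D,\s)$, obtained from $D$ by a finite chain of at most $O(n^{3})$ levelling operations. By the definition of a ``levelled sequence'' the operations considered in that lemma are exactly the ones that keep the running sequence inside $[A,B]$, so the containment is maintained throughout the process ``for free''. Next I would apply Lemma~\ref{lemma:redistribute} inductively along this chain: a single levelling operation preserves graphicity, so graphicity propagates from the initial $D$ to every intermediate sequence and in particular to $D^{*}$. Hence $D^{*}$ is graphic, levelled, and lies in $[A,B]$, which is exactly the statement of Theorem~\ref{theorem:levelled_graphic_sequence}.

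I do not anticipate a real obstacle here, since the two prior lemmas have already been crafted to mesh together. The only mild subtlety worth pointing out is that Lemma~\ref{lemma:redistribute} produces a graphic sequence from any levelling step regardless of interval constraints, whereas Lemma~\ref{lemma:making_levelled} is what supplies the $[A,B]$ containment; neither lemma alone suffices, but together they yield the theorem with essentially no additional argument.
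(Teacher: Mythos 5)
Your proposal is correct and follows essentially the same route as the paper: the authors likewise take any graphic $D \in (A,B)$, invoke Lemma~\ref{lemma:making_levelled} to reach a levelled sequence in $[A,B]$ via a finite chain of levelling operations, and note that Lemma~\ref{lemma:redistribute} preserves graphicity along that chain. Your remark that the two lemmas play complementary roles (containment versus graphicity) is exactly the point the paper relies on.
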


\subparagraph{Characterizing and Computing Levelled sequences.}

Given any interval sequence $\s = (A,B)$
and a real number $\ell\in[\min(A),\max(B)]$, let%
\footnote{One can think of $\s $ as representing a collection of $n$ connected vessels,
each in the shape of a unit column closed at both ends, then $F(\ell,\s)$
is the amount of fluid that will fill this connected vessel system to level
$\ell$.}
$$
F(\ell,\s)
~\triangleq~
\sum_{i \in[1,n]} (\min\{\ell,b_i\} - \min\{\ell,a_i\})
~.
$$
Observe that $F(\cdot,\s )$ is a non-decreasing function in the range
$(\min(A),\max(B))$.
Hence we may define the corresponding inverse function as
$F^{-1}(L,\s) = \min\{\ell ~|~ F(\ell,\s) = L\}$.

Given any interval sequence $\s =(A,B)$, we define
$I(\ell,\s )\triangleq \{i\in[1,n]~|~a_i<\ell<b_i\}$.

\begin{figure}[t]
\centering
\includegraphics[scale=.45]{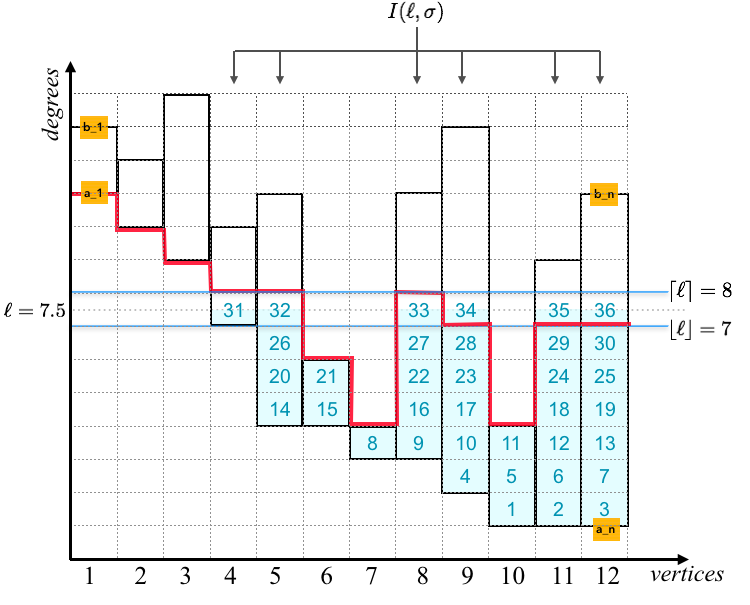}
\caption{Illustration of a levelled sequence $D$ (in red) satisfying
  $L = \vol(D,\s)=33$.  For $\ell=7.5$, $F(\ell=7.5,\s)=33$,
  $F(\floor{\ell} = 7,\s)=30$, and $F(\ceil{\ell} =8,\s)=36$.  The
  segments contributing to $F(\ell=7.5,\s)$, i.e., the parts of the
  connected vessel system filled with fluid, are shown in blue.  The
  values in $D$ at all indices in set $I(\ell,\s)$ differ by at most
  one as they lie in the set $\{\floor{\ell},\ceil{\ell}\}$.  }
\label{Figure:levelled}
\end{figure}

We conclude this section by providing the following theorems for characterising and computing levelled sequences.

\begin{theorem}
\label{theorem:characterisation_levelled_sequence}
Consider an interval sequence $\s = (A,B)$.  Let $L$ be an integer in
$[0, \l1(A,B)]$ and $\ell \geq 0$ be such that $\ell = F^{-1}(L,\s)$.
Then the collection of levelled sequences that have volume $L$ with respect to $\s$ 
is equal to the collection of sequences
$D=(d_1,\ldots,d_n)$ satisfying the following three conditions:
\begin{compactenum}[(a)]
\item $d_i=b_i$ for any $i$ satisfying $b_i \leq \ell$;
\item $d_i=a_i$ for any $i$ satisfying $a_i \geq \ell$;  and
\item Among all indices lying in set $I(\ell,\s)$, exactly $F(\ell,\s)
  - F(\floor{\ell},\s)$ indices $i$ satisfy $d_i = \ceil{\ell}$, and
  the remaining indices $i$ satisfy $d_i = \floor{\ell}$.
\end{compactenum}
\end{theorem}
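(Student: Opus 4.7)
The plan is to prove the two set-inclusions of the stated equality separately, both by a direct appeal to the definition $\ell = F^{-1}(L,\s)=\min\{\ell':F(\ell',\s)=L\}$.

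For the $(\Leftarrow)$ direction, let $D$ satisfy~(a)--(c). I would first compute $\vol(D,\s) = \sum_i(d_i-a_i)$ by splitting the sum according to the three conditions:
\[
\vol(D,\s) = \sum_{b_i \le \ell}(b_i-a_i) \;+\; \sum_{i \in I(\ell,\s)}(\floor{\ell}-a_i) \;+\; \bigl(F(\ell,\s) - F(\floor{\ell},\s)\bigr).
\]
A per-index check (treating whether $\ell$ is integer separately) shows that the first two sums collapse to $F(\floor{\ell},\s)$, giving $\vol(D,\s) = F(\ell,\s) = L$. To show that $D$ is levelled, consider any pair $\ia,\ib$ with $d_\ia > d_\ib$ for which the levelling is feasible, namely $d_\ia > a_\ia$ and $d_\ib < b_\ib$. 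Since class~(a) forces $d_i = b_i$ and class~(b) forces $d_i = a_i$, these feasibility conditions push $\ib$ into class~(b) or~(c) and $\ia$ into class~(a) or~(c). A case analysis over the four remaining class pairs shows that $d_\ia > d_\ib$ is only compatible with the pairing $\ia,\ib \in I(\ell,\s)$, $d_\ia=\ceil{\ell}$, $d_\ib=\floor{\ell}$; in that case $d_\ia - d_\ib \le 1$, so by Lemma~\ref{lemma:spread} the sequence $\pi(D,\ia,\ib)$ is similar to $D$ and $\phi$ is preserved.

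For the $(\Rightarrow)$ direction, let $D$ be levelled with $\vol(D,\s) = L$. Put $U=\{i:d_i>a_i\}$, $V=\{i:d_i<b_i\}$, and write $M=\max_{i\in U}d_i$ and $m=\min_{i\in V}d_i$ when the sets are nonempty. The levelled property is equivalent to $M \le m+1$, which is the only structural input used below. I would prove each of~(a), (b), (c) by contradiction. For~(a): suppose some $i$ has $b_i \le \ell$ yet $d_i < b_i$. Then $d_i \le \floor{\ell}-1$, so $m \le \floor{\ell}-1$ and $M \le \floor{\ell}$. The resulting uniform bound $d_j \le \min\{b_j,\max\{a_j,\floor{\ell}\}\}$ holds for every $j$, and summing it with one extra unit of slack at the bad index $i$ yields $L \le F(\floor{\ell},\s) - 1$, contradicting $F(\floor{\ell},\s) \le F(\ell,\s) = L$. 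A symmetric argument via the lower bound $d_j \ge \min\{b_j,\max\{a_j,\ceil{\ell}\}\}$ and the inequality $F(\ceil{\ell},\s) \ge F(\ell,\s) = L$ establishes~(b). For~(c), the same upper and lower bounds pinch $d_i$ into $\{\floor{\ell},\ceil{\ell}\}$ for $i \in I(\ell,\s)$, and then expanding $L = \vol(D,\s)$ identifies the number of indices with $d_i = \ceil{\ell}$ as exactly $F(\ell,\s) - F(\floor{\ell},\s)$.

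The main technical obstacle is the careful bookkeeping across the integer and non-integer regimes for $\ell$, and handling the degenerate cases $U=\emptyset$ (so $L=0$ and $\ell=\min(A)$) and $V=\emptyset$ (so $L=\l1(A,B)$ and $\ell=\max\{b_i : a_i < b_i\}$); in both extremes (a)--(c) can be verified directly. The conceptual crux is that $\ell$ is the \emph{minimum} preimage of $L$ under $F$: this is precisely what converts the soft bound $F(\floor{\ell},\s) \le L$ (and the symmetric bound on $\ceil{\ell}$) into the strict inequalities needed to close the contradictions in the reverse direction.
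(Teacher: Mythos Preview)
Your proposal is correct and follows essentially the same approach as the paper: both directions hinge on the fact that a levelled sequence cannot have an index $i$ with $d_i>a_i$ and an index $j$ with $d_j<b_j$ satisfying $d_i\ge d_j+2$ (your $M\le m+1$), combined with a volume-summing contradiction against the monotonicity of $F$. The only organizational differences are that the paper first packages this into the intermediate observations ``$d_i>\lceil\ell\rceil\Rightarrow d_i=a_i$'' and ``$d_i<\lfloor\ell\rfloor\Rightarrow d_i=b_i$'' and then derives (a)--(c) as quick corollaries, and that your $(\Leftarrow)$ direction verifies the levelled property directly whereas the paper dispatches it with the one-line remark that all sequences satisfying (a)--(c) are similar and have volume~$L$.
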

\begin{proof}
We first make some observations about levelled sequences.
Let $D=(d_1,\ldots,d_n)$ be a levelled sequence with respect to
$\s =(A,B)$ having volume $L$, then
\begin{compactenum}[(i)]
\item For any $i$ with $d_i > \ceil{\ell}$, $d_i=a_i$.
\item For any $i$ with $d_i < \floor{\ell}$, $d_i=b_i$.
\end{compactenum}
To prove claim (i), suppose, to the contrary, that there exists an
index $i_0$ such that $d_{i_0} > \max\{\ceil{\ell},a_{i_0}\}$.  If
there exists some $j_0 \ne i_0$ such that $d_{j_0} < \min\{\ceil{\ell}
,b_{j_0}\}$.  Then one can perform a levelling operation on the pair
$(i_0,j_0)$, to obtain a new sequence $D^*$ that lies between $A$ and
$B$, and has spread $\phi$ strictly less than that of sequence
$D$. This contradicts the fact that $D$ was levelled.  So let us
assume $d_j \ge \min\{\ceil{\ell},b_j\}$ for every $j\ne i_0$. Then
$$
d_j-a_j
~\geq~ \min\{\lceil \ell\rceil ,b_j\}-a_j~ \mbox{for each} ~j\ne i_0
~.
$$
For $i_0$ we have
$$
d_{i_0}-a_{i_0}
~>~    \max\{\ceil{\ell} ,a_{i_0}\}-a_{i_0}
~\geq~ \ceil{\ell} -a_{i_0}
~\geq~ \min\{\ceil{\ell} ,b_{i_0}\}-a_{i_0}
~.
$$
Combining above two inequalities, we get
$$
F(\ell,\s)
~=~ L
~=~ \sum_{1\leq i\leq n} (d_i-a_i)
~>~ \sum_{\substack{1\leq i\leq n,\\~a_i\leq \ceil{\ell}}} 
      (\min\{\ceil{\ell} ,b_i\}-a_i)
~=~ F(\ceil{\ell},\s)
~,
$$
where the inequality follows from the fact that $d_{i_0}-a_{i_0} > 0$.
(This is because if $a_{i_0} < \ceil{\ell}$, then index $i_0$
contributes a positive value of $d_{i_0}-a_{i_0}$ to the first sum and
zero to the second.)  This violates the fact that $F$ is
non-decreasing. The proof of claim (ii) follows in similar manner as
that of claim~(i), and is thus omitted.

We now show that each levelled sequence $D$ w.r.t. $\s $ satisfies
conditions (a), (b) and (c).  We show that for any $i$ satisfying
$b_i\leq \ell$ (or equivalently, $b_i \leq \floor{\ell}$), $d_i =
b_i$.  To prove this, let us assume to the contrary that $d_i < b_i$.
Then $d_i < \floor{\ell}$, which implies $d_i = b_i$, violating our
assumption $d_i < b_i$.  Similarly, it follows that for any $i$
satisfying $a_i \geq \ell$ (or equivalently, $a_i \geq \ceil{\ell}$),
$d_i=a_i$.

Next for any $i\in I(\ell,\s)$, we have
$d_i \in \{\floor{\ell}, \ceil{\ell}\}$.  Indeed, if $d_i
< \floor{\ell}$, then $b_i = d_i < \ell$, violating the fact $\ell <
b_i$.  Similarly, if $d_i > \ceil{\ell}$, then $a_i = d_i > \ell$,
violating the fact $a_i < \ell$.  Since the degrees are integral, the
indices in $I(\ell,\s )$ are partitioned into $L - F(\floor{\ell},\s)$
indices with degree $d_i = \ceil{\ell}$, and $|I(\ell,\s)| - (L -
F(\floor{\ell},\s))$ indices $i$ with degree $d_i = \floor{\ell}$.

To prove the converse, note that all sequences satisfying conditions
(a), (b) and (c) in the theorem are similar and have volume $L$.
\end{proof}

\begin{algorithm}
\caption{\textsc{Levelled-Sequence}($A,B,L$)}
\label{algo:levelled-sequence}
\lIf{$L\notin [0,\l1(A,B)]$}{Return Null}
$A',B' \gets 0$\;
\For{$i=1$ to $n$}{
   $A'[a_i] \gets A'[a_i] + 1$\;
   $B'[b_i] \gets B'[b_i] + 1$\;
}
\lFor{$k=1$ to $n$}{$Q[k] \gets Q[k-1] + A'[k-1] - B'[k-1]$}
$k \gets 1, L' \gets L$\;
\lWhile{$(Q[k] \leq L')$}{$L' \gets L' - Q[k]$ and $k \gets k+1$}
$\ell \gets k - 1 + L'/Q[k]$\;
\For{$i=1$ to $n$}{
   \lIf{$(a_i\geq \ell)$}{$d_i=a_i$}
   \lIf{$(b_i\leq \ell)$}{$d_i=b_i$}
   \If{$(a_i < \ell < b_i)$}{
      \lIf{$L' > 0$}{$d_i = \ceil{\ell}$ and $L' \gets L'-1$}
      \lElse{$d_i =\floor{\ell}$}
   }
}
Return $D$\;
\end{algorithm}

\begin{theorem}
\label{theorem:computing_levelled_sequence}
Given an interval sequence $\s = (A,B)$ consisting of $n$-pairs, and
an integer $L\in [0,\l1(A,B)]$, a levelled sequence $D$ having volume
$L$ with respect to $\s$ can be computed in $O(n)$ time.
\end{theorem}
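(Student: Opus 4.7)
The plan is to exploit Theorem~\ref{theorem:characterisation_levelled_sequence}: once we know $\ell = F^{-1}(L,\s)$ --- equivalently, $\floor{\ell}$ together with the residual $L - F(\floor{\ell},\s)$ --- the desired levelled sequence is completely determined (up to similarity) by conditions (a)--(c). So the task reduces to three subroutines, each of which I will implement in $O(n)$ time: (i) compute $F(k,\s)$ implicitly for all integer $k$, (ii) locate the critical level $\ell$, and (iii) assign the coordinates of $D$ in a single sweep.

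For step (i), I would use the telescoping identity $F(k,\s) - F(k-1,\s) = |\{i : a_i < k \le b_i\}|$, so if I set $Q[k] := |\{i : a_i < k \le b_i\}|$ then $F(k,\s) = \sum_{j\le k} Q[j]$. The key observation is that $Q[k] = Q[k-1] + A'[k-1] - B'[k-1]$ where $A'[m] = |\{i : a_i = m\}|$ and $B'[m] = |\{i : b_i = m\}|$. Bucketing the endpoints to build $A',B'$ takes $O(n)$ time, after which a single prefix scan produces $Q[0],\ldots,Q[n]$ in $O(n)$ time as well.

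For step (ii), the while-loop that repeatedly subtracts $Q[k]$ from a running copy $L'$ of $L$ locates the unique $k$ with $F(k-1,\s) \le L < F(k,\s)$; this terminates within $n$ iterations since $L \le \l1(A,B) = \sum_{j} Q[j]$. When the loop exits, $L' = L - F(k-1,\s)$ lies in $[0,Q[k])$, and I set $\ell = (k-1) + L'/Q[k]$. By construction $\floor{\ell} = k-1$; if $L' > 0$ then $\ceil{\ell} = k$, while if $L' = 0$ then $\ell = k-1$ is integral. Either way, $F(\ell,\s) - F(\floor{\ell},\s) = L'$, so $L'$ is exactly the number of indices in $I(\ell,\s)$ that condition~(c) of Theorem~\ref{theorem:characterisation_levelled_sequence} requires to receive $\ceil{\ell}$.

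For step (iii), I sweep once through the indices and apply the three rules of the characterization directly: set $d_i = b_i$ whenever $b_i \le \ell$, set $d_i = a_i$ whenever $a_i \ge \ell$, and for the remaining indices (those in $I(\ell,\s)$) give $\ceil{\ell}$ to the first $L'$ of them and $\floor{\ell}$ to the rest, using a single decrementing counter. Correctness follows immediately from Theorem~\ref{theorem:characterisation_levelled_sequence}; the main subtlety is the boundary behaviour when $\ell$ is integral or when several $a_i,b_i$ coincide with $\ell$, but in those cases the corresponding indices automatically fall into either the $b_i \le \ell$ or $a_i \ge \ell$ branches rather than into $I(\ell,\s)$, so no special casing is required. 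Each of the three phases runs in $O(n)$ time, yielding the claimed bound.
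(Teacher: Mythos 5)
Your proposal is correct and follows essentially the same route as the paper's proof: the same difference arrays $A',B'$, the same telescoping recurrence $Q[k]=Q[k-1]+A'[k-1]-B'[k-1]$ with a prefix scan, the same while-loop locating $\ell=(k-1)+L'/Q[k]$, and the same final sweep invoking Theorem~\ref{theorem:characterisation_levelled_sequence}. Your explicit treatment of the boundary cases (integral $\ell$, endpoints coinciding with $\ell$) is a small addition the paper leaves implicit, but the argument is the same.
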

\begin{proof}
%
For any $k \in [1,n]$, let $Q[k]$ denote the size of the set $\set{i
~|~ a_i < k \leq b_i}$.  In order to efficiently compute $Q$, we
define two $n$-sequences $A'$ and $B'$, where $A'[k]$ and $B'[k]$
count the number of intervals $i$ such that $a_i = k$ and $b_i = k$,
resp.  That is, $A'[k] = |\set{i : a_i = k}|$ and $B'[k] = |\set{i :
b_i = k}|$.  Observe that $Q[k] - Q[k-1] = A'[k-1] - B'[k-1]$, for
every $k \in [1,n]$.


Notice for any integer $\ell$, $F(\ell,\s) = \sum_{k \leq \ell} Q[k]$.
Let $\ell' = \max\{k \in[0,n] ~|~ \sum_{i \leq k} Q[i] \leq L\}$, so
that, $\ell'$ is the maximum integer for which $F(\ell',\s) \leq L$.
Also let
$$
\ell
= \ell' + \frac{L - \sum_{i\leq \ell'}Q[i]}{Q[\ell'+1]}
~.
$$

It is not hard to verify that $\ell = F^{-1}(L,\s )$.  Thus a levelled
sequence of volume $L$ can be computed in $O(n)$ time using
Theorem~\ref{theorem:characterisation_levelled_sequence}.  In
Algorithm~\ref{algo:levelled-sequence} we present the pseudocode of
our implementation that can be seen as filling the vessels with fluid
until reaching the desired level $\ell$.

The correctness of Algorithm~\ref{algo:levelled-sequence} follows from
above description, also it is easy to verify that the total run-time of
the algorithm is $O(n)$.
\end{proof}

\section{An $O(n \log n)$ time algorithm for Graphic Certificate}
\label{section:algo1}

In this section, we present an algorithm for computing a certificate
for interval sequence that takes just $O(n \log n)$ time.  If the
input interval $\s =(A,B)$ is realizable, our algorithm computes a
graphic sequence $D \in \s$, otherwise it computes a sequence
minimizing the deviation value $\delta(D,\s)$.
We begin by considering the case where the sequence $\s$ is
realizable~(since it is simpler to understand given Theorems~\ref{theorem:characterisation_levelled_sequence} and
\ref{theorem:computing_levelled_sequence}), and then we move to the case where $\s$ is non-realizable.
Then characterization of \cite{CDZ:00} implies 
an $O(n)$ time verification algorithm for realizability of interval sequence.  
(For details refer to the Appendix).


\subsection{Realizable Interval Sequences}

First we show that any two levelled sequences after an appropriate
reordering of their elements are coordinate-wise comparable.

\begin{lemma}
\label{lemma:similar_sequence}
For any interval sequence $\s =(A,B)$, and any two levelled sequences
$C, D \in \s$ satisfying $\vol(D,\s) \leq \vol(C,\s)$, the following
holds.
\begin{compactenum}
\item $D' \leq C$, for some sequence $D' \in \s$ similar to $D$.
\item $D \leq C''$, for some sequence $C'' \in \s$ similar to $C$.
\end{compactenum}
\end{lemma}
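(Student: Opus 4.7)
The plan is to leverage the explicit characterisation of levelled sequences in Theorem~\ref{theorem:characterisation_levelled_sequence}. Let $\ell_1 = F^{-1}(\vol(D,\s),\s)$ and $\ell_2 = F^{-1}(\vol(C,\s),\s)$. Since $F(\cdot,\s)$ is non-decreasing, $\vol(D,\s) \leq \vol(C,\s)$ forces $\ell_1 \leq \ell_2$, and the theorem then pins down each $d_i$ from $\ell_1$ and each $c_i$ from $\ell_2$.

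The first step is a short case analysis on the position of $[a_i,b_i]$ relative to $\ell_1,\ell_2$. Whenever $b_i \leq \ell_1$ or $a_i \geq \ell_2$ one obtains $d_i = c_i$ directly. In the remaining cases $d_i$ is one of $\{\floor{\ell_1},\ceil{\ell_1},a_i\}$ and $c_i$ is one of $\{\floor{\ell_2},\ceil{\ell_2},b_i\}$; using integrality of $a_i,b_i$ together with the arithmetic bound $\ceil{\ell_1} \leq \floor{\ell_2}$ --- which holds whenever $\floor{\ell_1} < \floor{\ell_2}$ or $\ell_1$ is an integer --- one verifies $d_i \leq c_i$. Consequently, pointwise dominance $D \leq C$ fails only in the degenerate case where $\ell_1,\ell_2$ lie together in a common open integer interval $(k,k+1)$; outside this case we are done by taking $D' = D$ (and later $C'' = C$).

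In the degenerate case we have $I(\ell_1,\s) = I(\ell_2,\s) =: I$, outside $I$ every index still satisfies $d_i = c_i$ (either $b_i \leq k$ or $a_i \geq k+1$), and inside $I$ both $d_i,c_i \in \{k,k+1\}$. By Theorem~\ref{theorem:characterisation_levelled_sequence}, $D$ has exactly $n_D := \vol(D,\s) - F(k,\s)$ indices in $I$ of value $k+1$, while $C$ has $n_C := \vol(C,\s) - F(k,\s) \geq n_D$ such indices. To construct $D'$, I would retain the values at every $i \notin I$, pick any subset $S' \subseteq \{i \in I : c_i = k+1\}$ with $|S'| = n_D$, and set $d'_i = k+1$ on $S'$ and $d'_i = k$ on $I \setminus S'$. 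Then $D'$ is a permutation of $D$, lies in $\s$ (since $a_i \leq k < k+1 \leq b_i$ on $I$), and satisfies $D' \leq C$ coordinate-wise.

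Part~(2) is symmetric. If pointwise dominance already holds take $C'' = C$; otherwise, in the degenerate case, permute $C$'s values on $I$ so that every index with $d_i = k+1$ receives $c''_i = k+1$, which is feasible precisely because $n_C \geq n_D$. I anticipate the main subtlety to lie in the initial case analysis --- specifically, in isolating the common-open-integer-interval scenario as the unique obstruction to pointwise dominance; once that is identified, the counting identity supplied by Theorem~\ref{theorem:characterisation_levelled_sequence} renders the final permutation construction essentially immediate.
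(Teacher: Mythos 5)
Your proof is correct and follows essentially the same route as the paper's: both split into the non-degenerate case $\ceil{\ell_1}\leq\floor{\ell_2}$ (where $D\leq C$ holds pointwise via Theorem~\ref{theorem:characterisation_levelled_sequence}) and the degenerate case where $\ell_1,\ell_2$ share an open unit interval $(k,k+1)$, and both then use the counting from condition~(c) of that theorem to permute the $\{k,k+1\}$ values on the common index set $I$. Your writeup is in fact slightly more explicit than the paper's about why these two cases are exhaustive.
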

\begin{proof}
We show how to transform $D=(d_1,\cdots,d_n)$ into sequence $D' =
(d'_1,\cdots,d'_n) \in \s$ such that $D' \leq C$.  Let $\ell_D =
F^{-1}(\vol(D,\s),\s)$ and $\ell_C = F^{-1}(\vol(C,\s),\s)$.  Since
$F(\cdot,\s)$ is a non-decreasing function, we have that $\ell_D \leq
\ell_C$.

Let us first consider the case where $\ell_C$ and $\ell_D$ are both
non-integral, and $\floor{\ell_C} = \floor{\ell_D}=(\text{say }\ell_1)$ and
$\ceil{\ell_C} = \ceil{\ell_D}=(\text{say }\ell_2)$.
By Theorem~\ref{theorem:characterisation_levelled_sequence}, for any
index $i \in [1,n]$,
\begin{inparaenum}[(i)]
\item $a_i\geq \ell_D$ (or $a_i\geq \ell_C$) implies $d_i=a_i=c_i$;
\item $b_i\leq \ell_D$ (or $b_i\leq \ell_C$) implies $d_i=b_i=c_i$.
\end{inparaenum}
Also, among indices in set $I_0 = I(\ell_D,\s) = I(\ell_C,\s)$,
\begin{inparaenum}[(i)]
\item exactly $L_D - F(\floor{\ell_D},\s)$ indices $i$ satisfy $d_i =
  \ell_2$ (let $I_D$ denote the set of these indices) and the
  remaining indices $i$ satisfy $d_i = \ell_1$;
\item exactly $L_C - F(\floor{\ell_C},\s)$ indices $i$ satisfy $c_i =
  \ell_2$ (let $I_C$ denote the set of these indices) and the
  remaining indices $i$ satisfy $c_i = \ell_1$.
\end{inparaenum}
Since $L_D\leq L_C$, it follows that $|I_D|\leq |I_C|$, however,
observe that $I_D$ need not be a subset of $I_C$.  We set $D'$ to be
the sequence
that satisfy the condition that
\begin{inparaenum}[(i)]
\item $d'_i=d_i$, for each $i\notin I_0$, and
\item for indices in $I_0$, at any arbitrary $|I_D|$ indices lying in
  $I_C$, $d'_i$ take the value $\ell_2$, and at remaining
  $|I_0|-|I_D|$ indices $d'_i$ take the value $\ell_1$.
\end{inparaenum}
It is easy to verify that $D$ and $D'$ are similar, and
$D'\leq C$.

The remaining case is when $\ceil{\ell_D} \leq \floor{\ell_C}$.  For
any index $i \in [1,n]$, $d_i \leq \ceil{\ell_D}$ and $c_i \geq
\floor{\ell_C}$, implies $d_i \leq c_i$.  Observe that by
Theorem~\ref{theorem:characterisation_levelled_sequence},
\begin{inparaenum}[(i)]
\item for an index $i$, $d_i>\lceil\ell_D\rceil$ implies $d_i=a_i(\leq
  c_i)$; and
\item for an index $i$, $c_i<\lfloor\ell_C\rfloor$ implies
  $c_i=b_i(\geq d_i)$.
\end{inparaenum}
Therefore, for each index $i$, $d_i\leq c_i$.  So in this case, we set
$D'$ to be $D$.
The construction of sequence $C''$ follows similarly.
\end{proof}

Next lemma shows significance of partitioning  an interval-sequence using a levelled sequence.

\begin{lemma}
\label{lemma:subdivide_interval}
Let $C$ and $D$ be any two levelled sequences lying in an interval
sequence $\s = (A,B)$, and having volume $L_C$ and $L_D$,
respectively. Also assume $D$ is a graphic sequence. Then,
\begin{compactenum}[(a)]
\item $L_D\leq L_C$ implies $(A,C)$ is a realizable interval sequence.

\item $L_D\geq L_C$ implies $(C,B)$ is a realizable interval sequence.
\end{compactenum}
\end{lemma}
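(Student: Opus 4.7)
The plan is to derive both parts of the lemma essentially as direct corollaries of Lemma~\ref{lemma:similar_sequence}, together with the obvious fact that graphicity is invariant under permutation of the sequence (since realizability of a degree sequence only depends on its multiset of values).

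For part (a), I would apply Lemma~\ref{lemma:similar_sequence} to the pair $(C,D)$ in exactly the form it is stated, using the hypothesis $\vol(D,\s)=L_D\leq L_C=\vol(C,\s)$. This yields a sequence $D'\in\s$ that is similar to $D$ and satisfies $D'\leq C$. From $D'\in\s$ we have $A\leq D'\leq B$, and combined with $D'\leq C$ we obtain $A\leq D'\leq C$, i.e.\ $D'\in(A,C)$. Since $D$ is graphic and $D'$ is a permutation of $D$, the sequence $D'$ is itself graphic. Hence $D'$ is a graphic sequence witnessing the realizability of $(A,C)$.

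For part (b), the roles of $C$ and $D$ with respect to volume are reversed ($L_C\leq L_D$), so I would invoke Lemma~\ref{lemma:similar_sequence} with the roles of $C$ and $D$ swapped: the levelled sequence of smaller volume is now $C$, and the one of larger volume is $D$. Taking the analogue of the second conclusion of that lemma, I get a sequence $D''\in\s$ similar to $D$ with $C\leq D''$. Then $C\leq D''\leq B$, so $D''\in(C,B)$; and since $D''$ is a permutation of the graphic sequence $D$, it is itself graphic. Thus $(C,B)$ is realizable.

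The argument is almost mechanical given Lemma~\ref{lemma:similar_sequence}, so the only substantive point to make explicit is that graphicity is preserved under reordering the entries of a sequence (this follows immediately because a realizing graph for one ordering is also a realizing graph for any permutation of it, up to relabelling the vertices). There is no real obstacle; the only thing to be careful about is to apply the correct direction of Lemma~\ref{lemma:similar_sequence} in each of the two parts so that the similar sequence is squeezed into the desired interval box, rather than forcing $C$ (or $D$) itself — which may not be graphic — into that box.
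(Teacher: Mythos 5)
Your proof is correct and follows essentially the same route as the paper: both parts are obtained by invoking Lemma~\ref{lemma:similar_sequence} to produce a permutation of the graphic sequence $D$ that is squeezed into the relevant sub-box, using that graphicity is invariant under reordering. The only difference is that you spell out part (b) explicitly (correctly selecting the second conclusion of Lemma~\ref{lemma:similar_sequence} with the roles of $C$ and $D$ swapped), whereas the paper dismisses it as symmetric.
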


\begin{proof}
We provide proof of the case $L_D \leq L_C$ (the proof of part (b)
will follow in a similar fashion).  By
Lemma~\ref{lemma:similar_sequence}, we can transform $D =
(d_1,\cdots,d_n)$ into another levelled sequence $D' =
(d'_1,\cdots,d'_n) \in \s$ such that $D'$ is similar to $D$ and $D'
\leq C$.  Since $D' \leq C$, and $D'$ is a graphic sequence, it
follows that $(A,C)$ is realizable interval sequence.
\end{proof}

From Lemma~\ref{lemma:subdivide_interval}, and the fact that each
realizable interval-sequence contains a levelled graphic sequence (see
Theorem~\ref{theorem:levelled_graphic_sequence}), we obtain following.

\begin{theorem}
For any realizable interval sequence $\s =(A,B)$, and any levelled
sequence $C \in \s$, 
at least one of the interval-sequences $(A,C)$ and $(C,B)$ is realizable.
\label{theorem:subdivide_interval}
\end{theorem}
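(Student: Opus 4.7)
The plan is to combine Theorem~\ref{theorem:levelled_graphic_sequence} with Lemma~\ref{lemma:subdivide_interval} in a case split driven by the volumes of $C$ and of a fixed levelled graphic witness. Concretely, since $\s$ is realizable, Theorem~\ref{theorem:levelled_graphic_sequence} hands us a levelled sequence $D \in \s$ that is graphic. Now I have two levelled sequences $C, D$ lying in the same interval sequence $\s$, and I can compare their volumes $\vol(C,\s)$ and $\vol(D,\s)$, both of which are well-defined integers in $[0,\l1(A,B)]$.

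The case split is then immediate. If $\vol(D,\s) \leq \vol(C,\s)$, then Lemma~\ref{lemma:subdivide_interval}(a), applied to the graphic levelled sequence $D$ and the levelled sequence $C$, yields that $(A,C)$ is realizable. Otherwise $\vol(D,\s) \geq \vol(C,\s)$ (strictly, but the lemma only needs $\geq$), and Lemma~\ref{lemma:subdivide_interval}(b) applied in the same way produces a realization of $(C,B)$. In either case, at least one of $(A,C)$ and $(C,B)$ is realizable, which is the claim.

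There is essentially no obstacle; the statement is a one-line corollary of the two results already established. The only thing worth pointing out explicitly in the write-up is that $D$ exists as a \emph{levelled graphic} sequence (not merely as a graphic sequence), since Lemma~\ref{lemma:subdivide_interval} requires both sequences being compared to be levelled. This is precisely the content of Theorem~\ref{theorem:levelled_graphic_sequence}, so no extra work is needed. The proof will therefore consist of a single short paragraph with the two cases explicitly invoked.
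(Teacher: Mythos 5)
Your proof is correct and is exactly the paper's argument: the authors also obtain this theorem by combining Theorem~\ref{theorem:levelled_graphic_sequence} (existence of a levelled graphic sequence in any realizable $\s$) with the two cases of Lemma~\ref{lemma:subdivide_interval} according to the volume comparison. Nothing further is needed.
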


The above theorem provides a divide-and-conquer strategy to search for
a levelled graphic sequence for realizable interval-sequences
as shown in Algorithm~\ref{algo:binary-search}.
Let $(A_0,B_0)$ be initialized to $(A,B)$.  We compute a levelled
sequence $C_0$ having volume $\floor{\l1(A_0,B_0)/2}$ using
Theorem~\ref{theorem:characterisation_levelled_sequence}.  It follows
from Theorem~\ref{theorem:subdivide_interval}, either $(A_0,C_0)$ or
$(C_0,B_0)$ must be a realizable interval-sequence. If $(A_0,C_0)$ is
realizable then we replace $B_0$ by $C_0$; otherwise $(C_0,B_0)$ must
be realizable, so we replace $A_0$ by $C_0$. We continue this process
(of replacements) until $\l1(A_0,B_0)$ decreases to a value smaller
than $2$. In the end, the interval sequence $(A_0,B_0)$ contains at
most two sequences, namely $A_0$ and $B_0$.  If $A_0$ is graphic then
we return $A_0$, otherwise we return $B_0$. The correctness of the
algorithm is immediate from the description.
It is also easy to verify that the algorithms outputs a graphic-certificate 
with the least possible number of edges.

\begin{algorithm}
\caption{\textsc{Certificate-Realizable}($A,B$)}
\label{algo:binary-search}
Initialize interval sequence $(A_0,B_0)$ to $(A,B)$\;
\While{$\l1(A_0,B_0)\geq 2$}{
$C_0\gets $ a levelled sequence of volume $\floor{\l1(A_0,B_0)/2}$\;
\lIf{\textup{(Interval-sequence $(A_0,C_0)$ is realizable)}}
	{$B_0\gets C_0$}
\lElse {$A_0\gets C_0$}
}
\lIf{$A_0$ is graphic}{Return $A_0$}
\lElse{Return $B_0$}
\end{algorithm}

To analyze the running time, observe that the $\l1$-distance
between $A_0$ and $B_0$ decreases by (roughly) a factor of $2$ in
each call of the while loop, so it follows that number of iterations is~
$O(\log n)$.  Verifying if an interval sequence is realizable, or a
sequence $D$ is graphic can be performed in $O(n)$ time, using
Theorem~\ref{theorem:implementation}.  Also in $O(n)$ time we can
generate a levelled sequence of any given volume $L$ by
Theorem~\ref{theorem:computing_levelled_sequence}.  Thus, the total
time complexity of the algorithm is $O(n\log n)$.

We obtain the following result:

\begin{theorem}
There exists an algorithm that for any integer $n \geq 1$ and any
$n$-length interval sequence $\s =(A,B)$, computes a graphic sequence
$D \in (A,B)$, if it exists, in $O(n\log n)$ time.

Moreover, our algorithm can also output graphic certificates corresponding to a sparsest and densest\footnote{Note that the sparsest (resp. densest) possible graph realizing an interval sequence $\s$ need not be unique.} possible graph (i.e. having minimum and maximum possible edges), 
realizing $\s$, in the same time.
\label{theorem:algo1_realizable}
\end{theorem}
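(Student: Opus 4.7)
The plan is to establish correctness of Algorithm~\ref{algo:binary-search}, bound its running time, and then extend the argument to the sparsest/densest variants. I would maintain the loop invariant that $(A_0,B_0)$ remains a realizable sub-interval of $(A,B)$. It holds initially by assumption. At each iteration, Theorem~\ref{theorem:computing_levelled_sequence} produces in $O(n)$ time a levelled $C_0\in(A_0,B_0)$ of volume $\floor{\l1(A_0,B_0)/2}$, and Theorem~\ref{theorem:subdivide_interval} guarantees that at least one of $(A_0,C_0)$ and $(C_0,B_0)$ is realizable. The algorithm tests $(A_0,C_0)$ using the $O(n)$ verifier of Theorem~\ref{theorem:implementation} and updates whichever endpoint preserves realizability. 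When the loop ends with $\l1(A_0,B_0)\le 1$, the interval contains only $A_0$ and $B_0$, so realizability forces at least one of them to be graphic, and Theorem~\ref{theorem:implementation} identifies the graphic one in $O(n)$.

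For the running time, observe that with $L=\l1(A_0,B_0)$ and $C_0$ of volume $\floor{L/2}$ one has $\l1(A_0,C_0)=\floor{L/2}$ and $\l1(C_0,B_0)=\ceil{L/2}$, so the $\l1$-distance at least halves per iteration. Since $\l1(A,B)=O(n^2)$, the loop performs $O(\log n)$ iterations, each doing $O(n)$ work, giving $O(n\log n)$ overall.

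For the sparsest certificate, the algorithm already prefers $(A_0,C_0)$ when realizable, so I would strengthen the invariant to: the minimum of $\textsum(D)$ over graphic $D\in(A_0,B_0)$ equals that minimum over graphic $D\in(A,B)$. When $B_0\gets C_0$, realizability of $(A_0,C_0)$ yields a graphic sequence of sum at most $\textsum(C_0)$, so any min-sum graphic $D^*\in(A_0,B_0)$ satisfies $\vol(D^*,\s)\le \vol(C_0,\s)$; levelling $D^*$ to $D^{**}$ via Lemma~\ref{lemma:redistribute} and applying Lemma~\ref{lemma:similar_sequence}(1) produces a same-sum graphic sequence $D'\sim D^{**}$ lying in $(A_0,C_0)$. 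When $A_0\gets C_0$, non-realizability of $(A_0,C_0)$ together with the contrapositive of Lemma~\ref{lemma:subdivide_interval}(a) forces the levelled form of any min-sum graphic $D^*$ to satisfy $\vol(D^{**},\s)> \vol(C_0,\s)$, and Lemma~\ref{lemma:similar_sequence}(2) then permutes $D^{**}$ into $(C_0,B_0)$. The densest certificate follows from the symmetric modification that prefers $(C_0,B_0)$ whenever realizable, with the two parts of Lemma~\ref{lemma:similar_sequence} and of Lemma~\ref{lemma:subdivide_interval} exchanged; running both variants still fits in $O(n\log n)$.

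The main obstacle I anticipate is this sparsest/densest correctness: one must verify that the greedy halving never accidentally discards an optimum. Lemma~\ref{lemma:similar_sequence} is the crucial tool, since it permutes a levelled graphic sequence into the retained sub-interval while preserving both its sum and its graphicity, ensuring that an optimum in the discarded half can always be replaced by an equally good sequence in the retained half.
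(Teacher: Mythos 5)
Your proposal is correct and follows essentially the same route as the paper: the same binary search on volume via Algorithm~\ref{algo:binary-search}, with Theorem~\ref{theorem:subdivide_interval} guaranteeing one half remains realizable, Theorems~\ref{theorem:computing_levelled_sequence} and~\ref{theorem:implementation} giving $O(n)$ work per iteration, and the halving of $\l1(A_0,B_0)$ giving $O(\log n)$ iterations. Your invariant-based argument for the sparsest/densest certificates (using Lemmas~\ref{lemma:redistribute}, \ref{lemma:similar_sequence} and the contrapositive of Lemma~\ref{lemma:subdivide_interval}) is a sound and more explicit version of what the paper dismisses as ``easy to verify.''
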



\subsection{Non-Realizable Sequences}

In this subsection we consider the scenario where $\s$ is non-realizable,
our goal is to compute a graphic sequence $D$ minimizing the
deviation $\delta(D,\s)$ with respect to the given interval sequence $\s$.

As a first step, we show that in order to search a sequence $D$
minimizing $\delta(D,\s)$, it suffices to search a sequence $D \geq A$
that minimizes the value $\delta_U(D,\s)$.

\begin{lemma}
\label{lemma:upper_range_suffices}
$\min\{\delta(D,\s )~|~D\text{ is graphic }\}
= \min\{\delta_U(D,\s )~|~D\text{ is graphic, } D\geq A\}$,
for any interval sequence $\s =(A,B)$.
\end{lemma}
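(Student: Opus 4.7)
My plan is to establish the two inequalities separately. The direction ``$\leq$'' is immediate: any graphic sequence $D \geq A$ satisfies $\delta_L(D,\s) = 0$ and hence $\delta(D,\s) = \delta_U(D,\s)$, so the right-hand side is the minimum of $\delta$ over the subclass $\{D : D \text{ graphic}, D \geq A\}$ of the left-hand side's optimization domain, and is therefore at least the left-hand side.

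For the non-trivial direction ``$\geq$'', I would show that for every graphic sequence $D^*$ there exists a graphic $D' \geq A$ with $\delta(D',\s) \leq \delta(D^*,\s)$. Since $D' \geq A$ forces $\delta_U(D',\s) = \delta(D',\s)$, this yields $\delta_U(D',\s) \leq \delta(D^*,\s)$ for every graphic $D^*$, so the right-hand side is at most the left-hand side. I would construct $D'$ by an iterative graph modification: fix a graph $G$ realizing $D^*$, and while some vertex $v_i$ has $\deg_G(v_i) < a_i$, observe that $\deg_G(v_i) \leq a_i - 1 \leq n - 2$, so $v_i$ has a non-neighbour $u$ in $G$; insert the edge $(v_i, u)$ to obtain a new simple graph, update the degree sequence, and repeat.

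The invariant I would track is that $\delta_L$ strictly decreases at each step while $\delta$ never increases. At index $i$, the updated degree $d_i + 1$ still lies in $[a_i, b_i]$, so the local contribution to $\delta_L$ drops by $1$ and the contribution to $\delta_U$ stays $0$. At index $u$, a short case analysis on the position of $\deg_G(u)$ relative to $[a_u, b_u]$ shows that the quantity $\max(0, a_u - d_u) + \max(0, d_u - b_u)$ changes by $-1$, $0$, or $+1$, according to whether $\deg_G(u) < a_u$, $a_u \leq \deg_G(u) < b_u$, or $\deg_G(u) \geq b_u$, respectively. Summing, $\delta$ changes by $-2$, $-1$, or $0$ per iteration while the non-negative integer $\delta_L$ strictly drops; the procedure therefore terminates in at most $\delta_L(D^*,\s) = O(n^2)$ iterations, producing the required graphic $D' \geq A$. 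The main obstacle, such as it is, lies precisely in this case analysis: the adverse case is $\deg_G(u) \geq b_u$, where $\delta_U$ at $u$ rises by $1$, but this is exactly compensated by the $-1$ drop of $\delta_L$ at $i$, preserving the monotonicity. The remainder is bookkeeping, relying only on the trivial fact that inserting an edge between two non-adjacent vertices of a simple graph yields another simple graph.
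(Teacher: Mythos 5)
Your proof is correct and rests on the same key idea as the paper's: insert an edge from a degree-deficient vertex to a non-neighbour and observe that the possible $+1$ to $\delta_U$ at the other endpoint is offset by the guaranteed $-1$ to $\delta_L$, so $\delta$ never increases while $\delta_L$ strictly drops. The paper packages this as an extremal argument (take a $\delta$-minimizer with smallest $\delta_L$ and derive a contradiction) whereas you run it as an iterative construction from an arbitrary graphic sequence, but the edge-insertion operation and the three-case accounting are the same.
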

\begin{proof}
Let $D = (d_1,\ldots,d_n)$ be a graphic sequence minimizing the value
$\delta(D,\s)$, and in case of ties take that $D$ for which
$\delta_L(D,\s)$ is the lowest.  Let us suppose there exists an index
$i \in [1,n]$ such that $d_i < a_i$.  Consider the graph $G$ realizing
the sequence $D$, and let $v_i$ denote the $i$th vertex of $G$, so
that, $\deg(v_i) = d_i$.
Observe that $|N_G(v_i)| \neq n-1$, since $d_i < a_i \leq n-1$.
For any vertex $v_j \notin N_G(v_i)$, $d_j = \deg(v_j)$
must be at least $b_j$, because otherwise adding $(v_i,v_j)$ to $G$
reduces $\delta(D,\s)$.  Thus for any vertex $v_j \notin N_G(v_i)$,
adding $(v_i,v_j)$ to $G$, decreases $\delta_L(D,\s )$ and increases
$\delta_U(D,\s)$ by a value exactly $1$.  However, by our choice $D$
was a sequence minimizing $\delta_L(D,\s )$, thus $\delta_L(D,\s )$
must be zero.  The claim follows from the fact that $D\geq A$ and
$\delta(D,\s) = \delta_U(D,\s)$.
\end{proof}

By the previous lemma, our goal is to find a graphic sequence $D$ in
the interval sequence $(A,\top)$ minimizing $\delta(D,\s)$.  Notice
that if $D$ is graphic, then the interval sequence $(A,R)$, where $R =
\max(D,B)$, is realizable.  Also, $\delta(D,S) = \textsum(R - B)$.
Hence, in order to compute a graphic sequence with minimum deviation,
we define $\R$ to be the set of all sequence $R \in [B,\top]$ such
that~
\begin{inparaenum}[(i)]
\item the interval sequence $(A,R)$ is realizable, and
\item $\textsum(R-B)$ is minimized.
\end{inparaenum}

The following lemma shows the significance of the set $\R$ in computing a
certificate with minimum deviation.

\begin{lemma}
\label{lemma:R_significance}
For any $R \in \R$, and any graphic sequence $D_0$ lying in the
interval sequence $(A,R)$, we have $\delta(D_0,\s) =
\min\{\delta(D,\s)~|~D\text{ is graphic }\}=\textsum(R-B)$.
\end{lemma}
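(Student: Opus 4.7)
The plan is to sandwich both $\delta(D_0,\s)$ and $\min\{\delta(D,\s):D~\text{graphic}\}$ between the common value $\textsum(R-B)$, using a direct upper-bound argument on $\delta(D_0,\s)$ and a lower-bound argument obtained via Lemma~\ref{lemma:upper_range_suffices} together with the minimality condition defining $\R$.

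For the upper bound, I would observe that for any graphic $D_0\in(A,R)$ the constraint $D_0\geq A$ forces $\delta_L(D_0,\s)=0$, while $\delta_U(D_0,\s)$ rewrites as $\textsum(\max(D_0,B)-B)$. Since $D_0\leq R$ and $R\geq B$, the coordinate-wise bound $\max(D_0,B)\leq R$ is immediate, giving $\delta(D_0,\s)=\delta_U(D_0,\s)\leq \textsum(R-B)$.

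For the lower bound, I would appeal to Lemma~\ref{lemma:upper_range_suffices} to restrict attention to graphic sequences $D\geq A$, and associate to each such $D$ the sequence $\widetilde R=\max(D,B)\in[B,\top]$. Then $A\leq D\leq \widetilde R$, so $D$ itself witnesses realizability of $(A,\widetilde R)$, placing $\widetilde R$ in the feasible set that $\R$ minimizes over. Hence $\delta_U(D,\s)=\textsum(\widetilde R-B)\geq \textsum(R-B)$, and Lemma~\ref{lemma:upper_range_suffices} then yields $\min\{\delta(D,\s):D~\text{graphic}\}\geq \textsum(R-B)$.

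Combining the two bounds with the trivial inequality $\min\{\delta(D,\s)\}\leq \delta(D_0,\s)$ collapses everything to equality. The one step that requires the most care is the lower bound: one must verify that the associated sequence $\widetilde R=\max(D,B)$ genuinely lies in the feasible set for $\R$ (it does, since $\widetilde R\geq B$, $\widetilde R\leq \top$, and $D$ certifies realizability of $(A,\widetilde R)$), and that its cost $\textsum(\widetilde R-B)$ coincides exactly with $\delta_U(D,\s)$. This is precisely why $\R$ is defined by minimizing $\textsum(R-B)$ over $R\in[B,\top]$ rather than inside $(A,B)$ — the broadening to $[B,\top]$ lets the sequence $\widetilde R$ absorb exactly the excess of $D$ above $B$, converting minimization of deviation into minimization of $\textsum(R-B)$.
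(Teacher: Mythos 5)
Your proof is correct and follows essentially the same route as the paper's: both rest on Lemma~\ref{lemma:upper_range_suffices}, the identity $\delta_U(D,\s)=\textsum(\max(D,B)-B)$, and the observation that $\max(D,B)$ lies in the feasible set over which $\R$ minimizes, so that minimality of $\textsum(R-B)$ forces all the inequalities to collapse. The only cosmetic difference is that you prove the two bounds separately while the paper chains the inequalities through a fixed minimizer $D^*$.
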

\begin{proof}
Let $D^*$ be a graphic sequence minimizing the value $\delta(D,\s)$.
By Lemma~\ref{lemma:upper_range_suffices}, we may assume that $D^*$
belongs to $(A,\top)$.
Observe that $\delta(D^*,\s) = \textsum(R^*-B)$, where $R^* =
\max\{B,D^*\}$.
By the choice of $D^*$ we have that
$
\textsum(R^* - B)
=    \delta(D^*,\s)
\leq \delta(D_0,\s)
=    \delta_U(D_0,\s)
\leq \textsum(R - B)
$,
where the last inequality follows from the fact that $D_0 \in (A,R)$.
By definition of $\R$, we have that $\textsum(R^* - B) \geq \textsum(R
- B)$, and therefore $\delta(D^*,\s) = \delta(D_0,\s) = \textsum(R -
B) = \textsum(R^* - B)$.  Thus $R^*$ also lies in the set $\R$.  The
lemma follows from the fact that $\delta(D^*,\s) =
\min\{\delta(D,\s)~|~D\text{ is graphic }\}$.
\end{proof}

Next, let $\R_L$ be the set of all levelled sequences in $\R$ with
respect to interval sequence $(B,\top)$.

\begin{lemma}
$\R_L \neq \emptyset$.
\end{lemma}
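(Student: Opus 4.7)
The plan is to start from an arbitrary $R \in \R$ and iteratively apply levelling operations with respect to the interval sequence $(B,\top)$, while preserving membership in $\R$ at every step. First, I would observe that $\R \neq \emptyset$: the complete graph $K_n$ realizes $\top$, so $(A,\top)$ is realizable, and thus $\{\textsum(R-B) : R \in [B,\top],\ (A,R) \text{ realizable}\}$ is a nonempty subset of $\mathbb{Z}_{\geq 0}$ and hence attains a minimum.

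The crucial step is the following closure property: if $R \in \R$ and $R^* = \pi(R,\alpha,\beta) \in [B,\top]$ for some indices $\alpha,\beta$ with $r_\alpha > r_\beta$, then $R^* \in \R$. The minimality of the sum is immediate since any levelling operation preserves the total, so $\textsum(R^* - B) = \textsum(R - B)$. What remains, and what I expect to be the main obstacle, is verifying that $(A, R^*)$ is realizable.

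To do so, I would take a graphic sequence $D \in (A,R)$ and split into two cases. If $d_\alpha < r_\alpha$, then already $D \in (A, R^*)$ and we are done. If $d_\alpha = r_\alpha$, then $d_\alpha = r_\alpha > r_\beta \geq d_\beta$, so $D^* = \pi(D,\alpha,\beta)$ is a valid levelling operation on $D$; by Lemma~\ref{lemma:redistribute}, $D^*$ is graphic. One then checks $D^* \in (A, R^*)$ coordinate by coordinate: $d^*_\alpha = r_\alpha - 1 = r^*_\alpha \geq b_\alpha \geq a_\alpha$ (the first inequality uses $R^* \geq B$, the second uses the original assumption $a_i \leq b_i$), $d^*_\beta = d_\beta + 1 \leq r_\beta + 1 = r^*_\beta$ and $d^*_\beta \geq a_\beta$, while the remaining coordinates are untouched. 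This establishes the closure property.

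With closure in hand, the conclusion is straightforward. Starting from any $R \in \R$, if $R$ is not yet levelled with respect to $(B,\top)$, then some spread-reducing levelling operation keeps it inside $[B,\top]$, and by the closure property the result still lies in $\R$. By Lemma~\ref{lemma:spread} each such step reduces the spread $\phi$ by at least $2$; since $\phi$ is a non-negative integer (in fact bounded by $O(n^3)$), the process terminates after finitely many steps at a sequence in $\R$ that admits no further spread-reducing levelling in $[B,\top]$, hence is levelled with respect to $(B,\top)$ and so belongs to $\R_L$.
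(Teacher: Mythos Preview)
Your proof is correct and follows essentially the same approach as the paper: establish that $\R$ is closed under levelling operations within $[B,\top]$ (by the same case split on whether $D$ already fits under $R^*$ or must itself be levelled), and then iterate to termination using the spread bound. You supply slightly more detail than the paper does---an explicit witness for $\R\neq\emptyset$ and an explicit appeal to Lemma~\ref{lemma:spread} for termination---but the argument is the same.
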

\begin{proof}
Clearly, $\R \neq \emptyset$.  Consider any sequence $R =
(r_1,\ldots,r_n) \in \R$.
Suppose there exists $\ia, \ib \in[1,n]$ such that $r_\ia - r_\ib \geq
1$ and $R' = \pi(R,\ia,\ib) \in [B,\top]$.
Observe that $\textsum(R'-B)=\textsum(R-B)$.  It remains to show that
$(A,R')$ is realizable.
Indeed, if $D \in (A,R)$ is a graphic sequence, then either
\begin{inparaenum}[(i)]
\item $D = (d_1,\ldots,d_n)$ lies in $(A,R')$, or
\item $d_\ia - d_\ib \geq 1$ and $D' = \pi(D,\ia,\ib)$ lies in $(A,R')$.
\end{inparaenum}
  Since levelling
operation preserves graphicity, $D'$ is graphic. Thus $R'\in \R$,
which shows that $\R$ is closed under the levelling operation, and
hence $\R_L$ is non-empty.
\end{proof}

\begin{algorithm}
\caption{\textsc{Certificate-Non-Realizable}($A,B$)}
\label{algo:binary-search-non}
$(M_1,M_2)\gets (B,\top)$\;
\While{$\l1(M_1,M_2)\geq 2$}{
$M_0\gets $ a levelled sequence of volume $\floor{\l1(M_1,M_2)/2}$\;
\lIf{\textup{(Interval-sequence $(A,M_0)$ is realizable)}}
	{$M_2\gets M_0$}
\lElse {$M_1\gets M_0$}
}
\lIf{$(A,M_1)$ is realizable}{$R\gets M_1$}
\lElse{$R\gets M_2$}
Return \textsc{Certificate-Realizable}($A,R$)
\end{algorithm}

We now describe the algorithm for computing a graphic sequence with minimum deviation
(refer to Algorithm~\ref{algo:binary-search-non} for a pseudocode). 
Recall that we assume that $(A,B)$ is a non-realizable interval sequence.
The first step is to compute a levelled sequence $R \in \R_L$, and the
second is to use Algorithm~\ref{algo:binary-search} to find a graphic
sequence in $(A,R)$.

We initialize two sequences $M_1$ and $M_2$,
resp., to $B$ and $\top$, and these sequences serve as lower
and upper boundaries for sequence $R$.  The pair $(M_1,M_2)$ is
updated as long as $\textsum(M_2-M_1) \geq 2$ as follows.  We compute
a levelled sequence $M_0$ having volume $\floor{\l1(M_1,M_2)/2}$ with
respect to the interval sequence $(M_1,M_2)$ using
Theorem~\ref{theorem:computing_levelled_sequence}.  There are two
cases:
\begin{description}
\item[{\bf Case 1.}] $(A,M_0)$ is realizable.

  Consider any sequence $R \in(M_1,M_2)$ that lies in $\R_L$.  Since
  $(A,M_0)$ is realizable, from the definition of $R$ it follows that
  $\textsum(R-B) \leq \textsum(M_0-B)$.  As $R$ and $M_0$ both belong
  to $(M_1,M_2)$, by Lemma~\ref{lemma:similar_sequence}, there exists
  a sequence $R_0$ similar to $R$ lying in interval $(M_1,M_2)
  \subseteq (B,\top)$ such that $R_0 \leq M_0$.  It is easy to check
  that $R_0 \in \R_L$, thus the search range of $R$ which was
  $(M_1,M_2)$ can be narrowed down to $(M_1,M_0)$, so we reset $M_2$
  to $M_0$.

\item[{\bf Case 2.}] $(A,M_0)$ is not realizable.
  
  Consider any $R\in \R_L$, we first show that $\textsum(R-B) >
  \textsum(M_0-B)$.  Let us assume on the contrary, $\textsum(R-B)\leq
  \textsum(M_0-B)$. In such a case, by
  Lemma~\ref{lemma:similar_sequence}, there exists a sequence $R'$
  similar to $R$ lying in interval $(M_1,M_2)\subseteq (B,\top)$ such
  that $R'\leq M_0$.  Also $R'\in \R_L$. Since, by definition of
  $\R_L$, $(A,R')$ is realizable, it violates the fact that $(A,M_0)$
  is not realizable.  Now as $R,M_0$ both belong to $(M_1,M_2)$, by
  Lemma~\ref{lemma:similar_sequence}, there exists a sequence $R_0$
  similar to $R$ lying in interval $(M_1,M_2)\subseteq (B,\top)$ such
  that $R_0\geq M_0$. Also $R_0\in \R_L$, thus the search range of $R$
  can be narrowed down to $(M_0,M_2)$, so we reset $M_1$ to $M_0$.
\end{description}

We continue the process of shrinking the range $(M_1,M_2)$ until
$\l1(M_1,M_2)$ decreases to a value smaller than $2$. Finally there
exists in range $(M_1,M_2)$ at most two sequences, namely $M_1$ and
$M_2$. If $(A,M_1)$ is graphic then we set $R$ to $M_1$, otherwise we
set $R$ to $M_2$.

The running time analysis is similar to the one for
Algorithm~\ref{algo:binary-search}.  Since the $\l1$-distance between
$M_1$ and $M_2$ decreases by a factor of $2$ in each successive call
of the while loop of the algorithm, it follows that number of times
the while loops run is $O(\log n)$.  Verifying if an interval sequence
is realizable, or a sequence $D$ is graphic can be performed in $O(n)$
time, using Theorem~\ref{theorem:implementation}.  Also it takes
$O(n)$ time to generate a levelled sequence of any given volume $L$ by
Theorem~\ref{theorem:computing_levelled_sequence}.
Finally, the running time of Algorithm~\ref{algo:binary-search} is
$O(n \log n)$.  Thus, the total time complexity of algorithm is
$O(n\log n)$.

This completes the proof of Theorem~\ref{theorem-our-result-1}.


\section{Most Regular Certificate in $O(n^2)$ time}
\label{section:algo2}

In this section, we present an $O(n^2)$-time algorithm for computing a
most-regular certificate with respect to a given interval sequence $\s =
(A,B)$.  We assume that $\s$ is realizable. 
%
Our algorithm involves a subroutine that given an integer $z
\in [\min(A),\max(B)-1]$, computes a most-regular graphic-sequence,
say $D$, satisfying the condition $z \leq \ell = F^{-1}(\vol(D,\s),\s)
\leq z+1$.
The following lemma is immediate from
Theorem~\ref{theorem:characterisation_levelled_sequence}.

\begin{lemma}
\label{observation-z}
Any levelled sequence $\bar{D} = (\bar d_1,\ldots,\bar d_n)$ of volume
$L$ with respect to interval sequence $\s = (A,B)$, satisfies $z \leq
\ell = F^{-1}(L,\s) \leq z+1$ if and only if $\bar d_i= a_i$ for $a_i
\geq z+1$, $\bar d_i=b_i$ for $b_i \leq z$, and $\bar d_i \in
\{z,z+1\}$ for remaining indices $i$.
\end{lemma}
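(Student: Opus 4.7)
\medskip
\noindent\textbf{Proof Proposal.} The plan is to deduce the lemma directly from Theorem~\ref{theorem:characterisation_levelled_sequence} by a short case analysis on the position of $\ell=F^{-1}(L,\s)$ inside the interval $[z,z+1]$. The key observation driving the argument is that, since $a_i$ and $b_i$ are integers, the strict/non-strict inequalities $a_i \geq \ell$, $b_i \leq \ell$, $a_i<\ell$, $b_i>\ell$ appearing in Theorem~\ref{theorem:characterisation_levelled_sequence} collapse to comparisons with the integers $z$ and $z+1$ whenever $\ell \in [z,z+1]$.

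For the forward direction, I would assume $\bar D$ is levelled of volume $L$ with $z\leq \ell\leq z+1$, and split into three subcases: (i) $z<\ell<z+1$, in which $\floor{\ell}=z$ and $\ceil{\ell}=z+1$, so conditions (a), (b) of the theorem give $\bar d_i=b_i$ exactly when $b_i\leq z$ and $\bar d_i=a_i$ exactly when $a_i\geq z+1$, while the set $I(\ell,\s)$ coincides with the ``remaining'' indices $\{i : a_i\leq z,\ b_i\geq z+1\}$ and condition (c) forces $\bar d_i\in\{z,z+1\}$ there; (ii) $\ell=z$, where $\floor{\ell}=\ceil{\ell}=z$ and the boundary indices with $a_i=z$ now fall under clause (b) with $\bar d_i=a_i=z\in\{z,z+1\}$, still matching the ``remaining'' bucket; (iii) $\ell=z+1$, handled symmetrically with boundary indices $b_i=z+1$ now receiving $\bar d_i=b_i=z+1$ via clause (a). In each case the three theorem conditions translate, after absorbing the boundary values into the ``remaining'' bucket, to the three conditions of the lemma.

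For the converse, suppose $\bar D$ satisfies the three stated conditions. I would let $t=|\{i\in J:\bar d_i=z+1\}|$ where $J=\{i:a_i\leq z,\ b_i\geq z+1\}$ is the set of remaining indices, and compute
\[
L \;=\; \vol(\bar D,\s)
\;=\; \sum_{b_i\leq z}(b_i-a_i) \;+\; \sum_{i\in J}(z-a_i) \;+\; t.
\]
Using the definition of $F$, one checks that the first two sums equal $F(z,\s)$ and that $F(z+1,\s)-F(z,\s)=|J|$, so that $L=F(z,\s)+t$ with $0\leq t\leq |J|$, giving $F(z,\s)\leq L\leq F(z+1,\s)$. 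Since $F(\cdot,\s)$ is non-decreasing, this places $\ell=F^{-1}(L,\s)$ in $[z,z+1]$, and a final consistency check against Theorem~\ref{theorem:characterisation_levelled_sequence} (using the same case split as above) confirms that $\bar D$ is levelled.

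The main obstacle, though modest, is the bookkeeping at the integer boundaries $\ell=z$ and $\ell=z+1$: indices with $a_i=z$ or $b_i=z+1$ can, depending on $\ell$, be assigned by either the ``$a_i=\bar d_i$'' clause, the ``$b_i=\bar d_i$'' clause, or the ``$\bar d_i\in\{z,z+1\}$'' clause, and one has to verify that in every legal assignment the resulting value still lies in $\{z,z+1\}$ so the lemma's trichotomy is preserved.
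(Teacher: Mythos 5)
Your proposal is essentially the paper's intended argument: the paper offers no proof beyond declaring the lemma ``immediate from Theorem~\ref{theorem:characterisation_levelled_sequence},'' and your case split on $\ell\in\{z\}\cup(z,z+1)\cup\{z+1\}$, together with the boundary bookkeeping for $a_i=z$ and $b_i=z+1$, is exactly the right way to make that immediacy explicit. The forward direction and the volume computation $L=F(z,\s)+t$ with $0\leq t\leq|J|$ are correct. One caveat on the last step of your converse: from $F(z,\s)\leq L\leq F(z+1,\s)$ and monotonicity you conclude $F^{-1}(L,\s)\in[z,z+1]$, but since the paper defines $F^{-1}(L,\s)=\min\{\ell\mid F(\ell,\s)=L\}$, this inference fails when $F(\cdot,\s)$ is constant at the value $L$ on an interval extending to the left of $z$ (e.g.\ $\s=([0,1],[0,1],[2,3],[2,3])$, $z=2$, $\bar D=(1,1,2,2)$: the three conditions hold with $t=0$, yet $F^{-1}(2,\s)=1<z$). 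This is really a degeneracy in the lemma's statement, which your proof inherits rather than introduces, and it is harmless for the use made of the lemma in Section~\ref{section:algo2} (only the forward direction is needed to guarantee that every levelled candidate is enumerated at some $z$); still, an airtight converse would either add the hypothesis $L>F(z-\epsilon,\s)$ for small $\epsilon>0$, or restate the conclusion as ``$F(z,\s)\leq L\leq F(z+1,\s)$'' rather than in terms of $F^{-1}$.
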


We partition the set $[1,n]$ into three sets $I_1$, $I_2$, $I_3$
such that $I_1=\{i\in[1,n]~|~a_i\geq z+1\}$, 
$I_2=\{i\in[1,n]~|~a_i\leq z~\mbox{ and }~z+1\leq b_i\}$, and
$I_3=\{i\in[1,n]~|~b_i\leq z\}$.
Also, using integer sort in linear time, we rearrange the pairs in
$(A,B)$ along with the corresponding sets $I_1,I_2,I_3$ so that
\begin{inparaenum}[(i)]
\item for any $i \in I_1$, $j \in I_2$, $k \in I_3$, we have $i < j <
  k$, and
\item the sub-sequences $A[I_1]$ and $B[I_3]$ are sorted in the
  non-increasing order.
\end{inparaenum}

We initialize $D_z = (d_{z,1},d_{z,2},\ldots,d_{z,n})$ by setting
$d_{z,i}$ to : $a_i$~if~$i\in I_1$, $z$ if $i\in I_2$, and $b_i$ if $i\in I_3$.
%
The sequence $D_z$ is sorted in non-increasing order, since the
sub-sequences $A[I_1]$ and $B[I_3]$ are sorted in non-increasing
order.  Let $\alpha = |I_1|$ and $\beta = |I_1|+|I_2|$, so that $I_2 =
[\alpha+1,\alpha+2,\ldots,\beta]$.
We would search all those indices $i\in [\alpha,\beta]$ such that on
incrementing $d_{\alpha+1}, \ldots, d_{i}$ to value $z+1$, the
resulting sequence is graphic; or equivalently, the sequence $D_z +
E_{[\alpha+1,i]}$ is graphic.  Note that for any index
$i\in[\alpha,\beta]$,
\begin{inparaenum}[(i)]
\item the sequence $D_z+E_{[\alpha+1,i]}$ is non-increasing, and
\item $A \leq D_z+E_{[\alpha+1,i]}\leq B$.
\end{inparaenum}
The next lemma, which follows from the definition of $\phi$, will be
used to compute $\phi(D_z+E_{[\alpha+1,i]})$ from $\phi(D_z)$.

\begin{lemma}
\label{lemma:phi_relation}
For any index $i \in[\alpha+1,\beta]$,
$\phi(D_z+E_{[\alpha+1,i]})=\phi(D_z)+(i-\alpha)(n-i-\alpha)$.
\end{lemma}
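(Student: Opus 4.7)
The plan is to directly expand $\phi(D_z + E_{[\alpha+1,i]})$ using the definition of spread, classify pairs $(r,s)$ according to whether each of $r,s$ lies inside the incremented block $J := [\alpha+1, i]$, and track the change in $|d_r - d_s|$ in each case using the structural description of $D_z$.

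First I would recall the key structural facts established just before the lemma: $d_{z,j} \ge z+1$ for $j \in I_1 = [1,\alpha]$, $d_{z,j} = z$ for $j \in I_2 = [\alpha+1,\beta]$, and $d_{z,j} \le z$ for $j \in I_3 = [\beta+1,n]$. Adding $E_J$ increments exactly the coordinates indexed by $J$, each of which currently equals $z$, raising them to $z+1$. I would then split the sum $\sum_{r<s}|d_r - d_s|$ into contributions from four types of unordered pairs: (i) both endpoints in $J$; (ii) neither endpoint in $J$; (iii) one endpoint in $J$, the other in $I_1$; (iv) one endpoint in $J$, the other in $I_2 \setminus J = [i+1,\beta]$; (v) one endpoint in $J$, the other in $I_3$. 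Pairs of type (i) contribute unchanged values, since both coordinates shift by $+1$, and pairs of type (ii) are trivially unchanged.

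For the remaining three mixed types, the sign of $d_r - d_s$ before and after incrementing is pinned down by the structural facts above. For a pair with $j \in J$ (value $z \to z+1$) and $k \in I_1$ (value $\ge z+1$), we have $d_k \ge d_j + 1$ both before and after, so $|d_k - d_j|$ decreases by exactly $1$; there are $\alpha(i-\alpha)$ such pairs. For $j \in J$ and $k \in [i+1,\beta]$ (value $z$), the contribution goes from $0$ to $1$, gaining $(\beta - i)(i-\alpha)$. For $j \in J$ and $k \in I_3$ (value $\le z$), we have $d_j \ge d_k$ both before and after, so the absolute value increases by $1$, gaining $(n-\beta)(i-\alpha)$.

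Summing up, the net change is
\[
(i-\alpha)\bigl[-\alpha + (\beta - i) + (n - \beta)\bigr] \;=\; (i-\alpha)(n - i - \alpha),
\]
which is exactly the claimed increment. The main (and essentially only) obstacle is keeping the case analysis clean and making sure that in case (iii) the inequality $d_k \ge d_j + 1$ continues to hold after the increment (which it does, precisely because $I_1$ is defined by $a_i \ge z+1$, hence $d_{z,k} \ge z+1$, so $d_{z,k} \ge (z+1) = d_{z,j}+1$ before, and $d_{z,k} \ge z+1 = d_{z,j}$ is still $\ge$ after, with the difference dropping by one). Once the bookkeeping of these three counts is in place the algebraic collapse to $(i-\alpha)(n-i-\alpha)$ is immediate.
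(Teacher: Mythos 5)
Your proof is correct and follows essentially the same route as the paper's: both expand $\phi$ over unordered pairs, observe that only pairs with exactly one endpoint in $[\alpha+1,i]$ change, and count a loss of $\alpha(i-\alpha)$ against the block $I_1$ and a gain of $(i-\alpha)(n-i)$ against the indices beyond $i$ (you merely split the latter into $[i+1,\beta]$ and $I_3$, which recombine since $(\beta-i)+(n-\beta)=n-i$). The sign determinations you make from the value structure ($\ge z+1$, $=z$, $\le z$) are exactly what the paper extracts from the non-increasing ordering of $D_z$, so the two arguments coincide.
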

\begin{proof}
Let $D^i_z = D_z + E_{[\alpha+1,i]}$, for $i \in [\alpha+1,\beta]$.
\begin{align*}
\phi(D^i_z)
= & \sum_{j=1}^{n-1} \sum_{k=j+1}^n \abs{d^i_{z,j} - d^i_{z,k}} \\
= & \sum_{j \leq \alpha}
      \paren{
         \sum_{k > j, k \not\in [\alpha+1,i]} \abs{d_{z,j} - d_{z,k}} +
         \sum_{k \in [\alpha+1,i]} \abs{d_{z,j} - d_{z,k} - 1}
      } + \\
  & \sum_{j \in [\alpha+1,i]}
      \paren{
         \sum_{k \in [j+1,i]} \abs{d_{z,j} - d_{z,k}} + 
         \sum_{k > i} \abs{d_{z,j} + 1 - d_{z,k}} +
      } + \\
  & \sum_{j > i} \sum_{k > j} \abs{d_{z,j} - d_{z,k}} \\
= & \sum_{j=1}^{n-1} \sum_{k=j+1}^n \abs{d_{z,j} - d_{z,k}} 
    - \alpha \cdot (i-\alpha) + (i-\alpha) \cdot (n-i) \\
= & \phi(D_z)+ (i - \alpha) \cdot (n - i - \alpha)
~,
\end{align*}
and the lemma follows.
\end{proof}

For each $z$ we compute the vectors $X(D_z)$ and $Y(D_z)$ using
Theorem~\ref{theorem:implementation} in the Appendix.  For each
integer $k\in [1,n]$, let
$$\avd(k) = \set{i\in[\alpha,\beta]~\mid~ X_k\big(D_z+E_{[\alpha+1,i]}\big) >
                                  Y_k\big(D_z+E_{[\alpha+1,i]}\big) }\textup{,~and}$$
$$\avd = \textstyle \bigcup_{k=1}^n \avd(k)~.$$

Observe that by Theorem~\ref{theorem:EG}, for any $i \in
[\alpha,\beta]$, the sequence $D_z + E_{[\alpha+1,i]}$ is graphic if
and only if $i$ does not lie in the set $\avd$, and
$\parity(D_z+E_{[\alpha+1,i]}) = 0$.
The following lemma shows that the set  $\avd(k)$, for any index $k$,
is computable in $O(1)$ time.

\begin{lemma}
\label{lemma:avd_contiguous}
For each $k \in [1,n]$, $\avd(k)$ is a contiguous sub-interval of $[1,n]$,
and is computable in $O(1)$ time.
\end{lemma}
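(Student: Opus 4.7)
The plan is to write $h(i) := X_k\bigl(D_z + E_{[\alpha+1,i]}\bigr) - Y_k\bigl(D_z + E_{[\alpha+1,i]}\bigr)$ as an explicit piecewise-linear function of $i \in [\alpha,\beta]$ and show that $h$ is unimodal (non-decreasing then non-increasing). Since $\avd(k) = \{i \in [\alpha,\beta] : h(i) > 0\}$ and a unimodal function has a contiguous super-level set, this at once yields the structural claim. Coupled with the $O(n)$-time precomputation of the vectors $X(D_z)$ and $Y(D_z)$ via Theorem~\ref{theorem:implementation}, it will also give the $O(1)$ evaluation bound.

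I would first analyse each of the two terms separately, relative to their baseline values at $i = \alpha$. Incrementing $d_{z,j}$ by $1$ at the positions $j \in [\alpha+1,i]$ affects the prefix sum $X_k$ only via positions with $j \leq k$, giving
\[
X_k\bigl(D_z + E_{[\alpha+1,i]}\bigr) - X_k(D_z) \;=\; \max\bigl\{0,\; \min(i,k) - \alpha\bigr\}.
\]
For the $Y_k$ term, only positions $j \in [\alpha+1,i]$ with $j > k$ contribute a change, and each such $j$ has its contribution $\min(d_j,k)$ moved from $\min(z,k)$ to $\min(z+1,k)$ --- a jump of $1$ exactly when $k \geq z+1$, and $0$ otherwise. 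Writing $c := 1$ if $k \geq z+1$ and $c := 0$ otherwise, one obtains
\[
Y_k\bigl(D_z + E_{[\alpha+1,i]}\bigr) - Y_k(D_z) \;=\; c \cdot \max\bigl\{0,\; i - \max(k,\alpha)\bigr\}.
\]

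Subtracting, $h(i) - h(\alpha)$ is piecewise linear in $i$ with at most one interior breakpoint, located at $i^* := \min\bigl(\max(k,\alpha),\beta\bigr)$. A short case check on $k$ versus $\alpha$ and $\beta$ then shows that the slope of $h$ is either $0$ or $+1$ on $[\alpha,i^*]$ and either $0$ or $-c$ on $[i^*,\beta]$. Consequently $h$ attains its maximum at $i^*$ and is unimodal, so $\avd(k)$ is a contiguous sub-interval of $[\alpha,\beta]$. Its two endpoints are found in $O(1)$ time by solving a single linear inequality on each of the (at most) two linear pieces of $h$, starting from the already-tabulated value $h(\alpha) = X_k(D_z) - Y_k(D_z)$.

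The main bookkeeping obstacle is the case split: $k$ may be less than $\alpha$, in $[\alpha,\beta]$, or greater than $\beta$ (which determines whether the kink of $h$ lies strictly inside $(\alpha,\beta)$ or is pushed to an endpoint), and independently $k$ may or may not exceed $z$ (which determines whether the $Y_k$ term changes at all). Once these sub-cases are enumerated, each gives a closed-form expression for the endpoints of $\avd(k)$ directly in terms of $h(\alpha)$, $\alpha$, $\beta$, $k$ and $c$, completing both parts of the lemma.
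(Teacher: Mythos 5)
Your proposal is correct and follows essentially the same route as the paper: both compute the increments $X_k(D_z+E_{[\alpha+1,i]})-X_k(D_z)=\max\{0,\min(i,k)-\alpha\}$ and $Y_k(D_z+E_{[\alpha+1,i]})-Y_k(D_z)=c\cdot\max\{0,i-\max(k,\alpha)\}$ with the same case split on $k$ versus $z$ and the same breakpoint at $i=k$. The only difference is packaging --- you conclude contiguity from unimodality of $h$, whereas the paper writes $\avd(k)$ explicitly as the union of two abutting intervals $\avd_1(k)\cup\avd_2(k)$; your framing is a slightly cleaner way to see why the union is contiguous, but the underlying computation is identical.
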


\begin{proof}
For each $k \in [1,n]$, let
\begin{align*}
\avd_1(k)
& = [\alpha,\beta] \cap [\alpha + 1+Y_k(D_z)-X_k(D_z),k] \\
\avd_2(k)
& = [\alpha,\beta] \cap
\begin{cases}
[k,n] & \hspace{-35mm} k \leq z \text{ and } \max\{k-\alpha,0\}+X_k(D_z) > Y_k(D_z);\\
[k,\max\{k,\alpha\}+X_k(D_z)-Y_k(D_z)-1] & \hspace{20mm} k \geq z+1;\\
\emptyset  & \hspace{-19mm} \max\{k-\alpha,0\}+X_k(D_z)\leq Y_k(D_z).
\end{cases}
\end{align*}

We fix an index $k\in[1,n]$ for the rest of the proof.
Our goal will be to show that the
set $\avd(k)$ is the union of the interval $\avd_1(k)$
and $\avd_2(k)$, and is thus computable in $O(1)$ time.
Consider any $i \in [\alpha,\beta] \cap [1,k]$.  Since $\alpha \leq i
\leq k$, we have that
\begin{align*}
X_k(D_z + E_{[\alpha+1,i]})
& = \textstyle \sum_{j=1}^k d_{z,j} + (i-\alpha)
  = X_k(D_z)+(i-\alpha)
~, \\
Y_k(D_z + E_{[\alpha+1,i]})
& = \textstyle k(k-1) + \sum_{j = k+1}^n \min(d_{z,j},k)
  = Y_k(D_z)
~.
\end{align*}
Observe $i \in \avd(k)$ if and only if $X_k(D_z+E_{[\alpha+1,i]}) \geq
Y_k(D_z+E_{[\alpha+1,i]}) + 1$, which is equivalent to
$$
\alpha + 1 + Y_k(D_z) - X_k(D_z) \leq i
~.
$$
It follows that
$$
\avd(k) \cap [1,k]
= [\alpha,\beta] \cap [\alpha + 1+Y_k(D_z)-X_k(D_z),k]
= \avd_1(k)
~.
$$
Next consider any $i \in [\alpha, \beta] \cap [k,n]$.  As $k \leq i
\leq \beta$, we have that
\begin{align*}
X_k(D_z+E_{[\alpha+1,i]})
& = \textstyle \sum_{j=1}^k d_{z,j} + \max\set{0,k-\alpha}
  = X_k(D_z) + \max\set{0,k-\alpha}
~,
\\
Y_k(D_z+E_{[\alpha+1,i]})
& =
\begin{cases}
Y_k(D_z)  &  k\leq z,
\\
Y_k(D_z)+(i-\alpha)  &  k\geq z+1.
\end{cases}
\end{align*}
We have the following three different cases:
\begin{enumerate}[\text{~~Case} 1:]
\item $k \leq z$ and $\max\{0,k-\alpha\}+X_k(D_z) > Y_k(D_z)$:

  In this case $X_k(D_z+E_{[\alpha+1,i]}) = X_k(D_z) +
  \max\{0,k-\alpha\}$ and $Y_k(D_z+E_{[\alpha+1,i]}) = Y_k(D_z)$,
  hence $X_k(D_z+E_{[\alpha+1,i]}) > Y_k(D_z+E_{[\alpha+1,i]})$,
  implying that $i\in \avd_2(k)$.  Therefore, in this case $\avd_2(k)
  = [\alpha,\beta] \cap [k,n]$.

\item $\max\{0,k-\alpha\} + X_k(D_z) \leq Y_k(D_z)$:

  In this case $X_k(D_z+E_{[\alpha+1,i]}) = X_k(D_z) +
  \max\{0,k-\alpha\} \leq Y_k(D_z)\leq Y_k(D_z+E_{[\alpha+1,i]})$,
  thus $\avd_2(k)=\emptyset$.

\item $k \geq z+1$:
  
  In this case $Y_k(D_z+E_{[\alpha+1,i]}) = Y_k(D_z)+(i-\alpha)$.
  Observe that, $i \in \avd(k)$ if and only if
  $X_k(D_z+E_{[\alpha+1,i]})\geq Y_k(D_z+E_{[\alpha+1,i]}) + 1$.  The
  latter is satisfied if and only if $X_k(D_z) + \max\set{0,k-\alpha}
  \geq Y_k(D_z) + (i-\alpha) + 1$, or if
$$
i \leq \max\set{\alpha,k} + X_k(D_z) - Y_k(D_z) -1 ~.
$$
  Hence, $\avd_2(k) = [\alpha,\beta] \cap
  [k,\max\{k,\alpha\}+X_k(D_z)-Y_k(D_z)-1]$.
\end{enumerate}
The lemma follows.
\end{proof}

\begin{algorithm}[!ht]
\caption{$\textsc{Most}$-$\textsc{Regular}$-$\textsc{Certificate}(A,B)$}
\label{Algorithm:regular-certificate}
$\textsc{opt} \gets \infty$\;
\ForEach{$z\in [\min(A),\max(B)-1]$}{
$I_1\gets \{i\in[1,n]~|~a_i\geq z+1\}$\;
$I_2\gets \{i\in[1,n]~|~a_i\leq z, z+1\leq b_i\}$\;
$I_3\gets \{i\in[1,n]~|~b_i\leq z\}$\;
Rearrange the pairs in $(A,B)$ along with the corresponding sets $I_1,I_2,I_3$ so that
\quad (i)~for~any triplet $(i,j,k)$ satisfying $i\in I_1$, $j\in I_2$, $k\in I_3$, we have $i<j<k$, and
\quad (ii)~the sub-sequences $A[I_1]$ and $B[I_3]$ are sorted in the non-increasing order\;
Initialize $D_z=(d_1,d_2,\ldots,d_n)$, where for $i\in I_1$, $d_i=a_i$; for $i\in I_2$, $d_i=z$; 
		and for $i\in I_3$, $d_i=b_i$\;
\lIf{$z=\min(A)$}{set $\phi(D_z)=\sum_{1\leq r<s\leq n} |a_r-a_s|$}
Compute $X(D_z),Y(D_z)$ using Theorem~\ref{theorem:implementation}\;
Let $\alpha=|I_1|$ and $\beta=|I_1|+|I_2|$\;
\For{$k=1$ to $n$}{
{$\avd_1(k)= [\alpha,\beta] \cap [\alpha+1+Y_k(D_z)-X_k(D_z),k]$\;}
\vspace{.5mm}
\lIf{$\max\{k-\alpha,0\}+X_k(D_z) > Y_k(D_z)$ and $k \leq z$}
		{$\avd_2(k)=[\alpha,\beta] \cap [k,n]$}
\lElseIf{$\max\{k-\alpha,0\}+X_k(D_z)\leq Y_k(D_z)$}{$\avd_2(k)=\emptyset$} 
\lElse{$\avd_2(k)=[\alpha,\beta] \cap [k,\max\{k,\alpha\}+X_k(D_z)-Y_k(D_z)-1]$}
\vspace{.5mm}
$\avd(k)=\avd_1(k) \cup \avd_2(k)$\;
}
Compute $\avd=\bigcup_{k=1}^n \avd(k)$\;
\vspace{.5mm}
\ForEach{$i\in~[\alpha,\beta]\setminus \avd$}{
\If{$(\parity(D_z)= (i-\alpha) \mod 2)$}{
	Compute $\phi(D_z+E_{[\alpha+1,i]})=\phi(D_z)+(i-\alpha)(n-i-\alpha)$\;
	$\textsc{opt}=\min\{\textsc{opt},\phi(D_z+E_{[\alpha+1,i]})\}$\;
	}
}
Set $\phi(D_{z+1})=\phi(D_z)+(\beta-\alpha)(n-\beta-\alpha)$\;
}
Return $\textsc{opt}$ and the corresponding graphic sequence\;
\end{algorithm}

Algorithm~\ref{Algorithm:regular-certificate} presents the procedure
for computing the most-regular certificate.
For each $k\in[1,n]$, $\avd(k)$ is a contiguous
sub-interval of $[1,n]$, therefore, the union $\avd = \bigcup_{k=1}^n
\avd(k)$ can be computed in linear time using simple stack based
data-structure, once the intervals are sorted in order of their
endpoints%
\footnote{We say $[r,s]\leq [r',s']$ if either
\begin{inparaenum}[(i)]
\item $r<r'$, or
\item $r=r'$ and $s \leq s'$.
\end{inparaenum}
}
using integer sort.
Let ${\cal I}_z$ denote the set obtained by removing from
$[\alpha,\beta] \setminus \avd$ each index $i$ for which $\parity(D_z
+ E_{[\alpha+1,i]}) = \parity(\textsum(D_z) + (i-\alpha))$ is
non-zero.  Since $\textsum(D_z)$ (or $\parity(D_z)$) is computable in
$O(n)$, the set ${\cal I}_z$ can be computed in $O(n)$ time as well.
Note that $D_z + E_{[\alpha+1,i]}$ is graphic if and only if $i \in
{\cal I}_z$.  By Lemma~\ref{lemma:phi_relation}, for any index $i \in
{\cal I}_z$, the value $\phi(D_z+E_{[\alpha+1,i]})$ is computable in
$O(1)$ time, once we know $\phi(D_z)$.  This shows that in just $O(n)$
time, we can compute the spread of all the levelled sequences $D$
satisfying $z \leq F^{-1}(\vol(D),\s) < z+1$, and also find a sequence
having the minimum spread.  All~that remains is to efficiently
computing $\phi(D_z)$ for each $z\in[\min(A),\max(B)]$.  Observe that
$D_{\min(A)}=A$, and so $\phi(D_{\min(A)})=\sum_{1\leq r<s\leq n}
|a_r-a_s|$ is computable in $O(n^2)$ time. Next by
Lemma~\ref{lemma:phi_relation}, for any $z\in [\min(A),\max(B)-1]$,
$\phi(D_{z+1})=\phi(D_z)+(\beta-\alpha)(n-\beta-\alpha)$ is computable
in $O(1)$ time.  Since $z$ can take $\max(B)-\min(A)-1$ values, our
algorithm in total takes $O(n^2~+~n(\max(B)-\min(A)-1))=O(n^2)$
time.

This completes the proof of Theorem~\ref{theorem-our-result-2}.


\section{Applications and Extensions}
\label{section:application}

In this section, we discuss some related problems whose solutions follow as immediate application
of our interval sequence work.

\begin{problem}[\normalfont{Minimum Graphic extensions}]
Given a sequence $A=(a_1,\ldots,a_p)$ 
find the minimum integer $n(\geq p)$ 
such that a super sequence $D=(a_1,\ldots,a_p,d_{p+1},d_{p+2},\ldots,d_n)$ of sequence $A$ is realizable.
\end{problem}
{\em Solution}:
Let $M$ denote the value $\max(A)=\max_{i\in[1,p]} a_i$.
For any $n\geq p$, let $\s_n =([a_1,a_1],\ldots,$
$[a_p,a_p],[1,n],\ldots,[1,n])$ denote the sequence obtained by appending 
$n-p$ copies of interval $[1,n]$ to interval sequence $(A,A )$.
Let $n_0$ the denote the length of a minimum graphic extension of~$A$.
Observe that $n_0\in [\max\{p,M\},p+M]$. 
The lower limit is due to the fact that the length of minimum graphic
extension of $A$ must be at least $\max\{p,M\}$;
the upper limit holds since one can have a bipartite graph with partitions $X=\{x_1,\ldots,x_p\}$
and $Y=\{y_1,\ldots,Y_M\}$ of length $p$ and $M$, and for $i\in[1,p]$, connect the vertex $x_i$ 
to vertices $y_1,\ldots,y_{a_i}$.
It turns out that we need to find the smallest integer $n\in[\max\{p,M\},p+M]$ such that
$\s_n$ is graphic.
The minimum $n$ can be obtained by a binary search over the range $[\max\{p,M\},p+d]$ and 
using Theorem~\ref{theorem:CDZ}; this takes $O(\max\{p,M\} \log \max\{p,M\})$ time.
Once $n_0$ is known, the optimal graphic extension can be computed using Theorem~\ref{theorem:algo1_realizable}
for searching graphic certificate 
in $O(\max\{p,M\} \log \max\{p,M\})$ time.


\begin{problem}
Given $A=(a_1,\ldots,a_n)$, find a graphic sequence $D=(d_1,\ldots,d_n)$ whose
chebyshev distance ($L_\infty$ distance) from $A$ is minimum.
\end{problem}
{\em Solution}: The above problem can be reduced to interval sequence problem,
as we need to find smallest non-negative integer
$c\in[1,n]$ such that $\s_c =([a_1-c,a_1+c],\ldots,[a_n-c,a_n+c])$ is realizable.
To find the minimum $c$, we do a binary search with help of Theorem~\ref{theorem:CDZ} for verification; this takes $O(n \log n)$ time. 
Once optimal $c$ is known, the sequence $D$ can be computed
using Theorem~\ref{theorem:algo1_realizable}
to search graphic certificate in $\s_c$, thus the time complexity for computing 
sequence $D$ is $O(n\log n)$.

\begin{problem}
Given $A=(a_1,\ldots,a_n)$, find minimum fraction $\epsilon$ and a
graphic sequence $D=(d_1,\ldots,d_n)$ satisfying $a_i(1-\epsilon)\leq d_i \leq a_i(1+\epsilon)$.
\end{problem}
{\em Solution}:
Again we need to find smallest non-negative fraction
$\epsilon$ such that the interval sequence 
$\s_\epsilon =([a_1(1-\epsilon),a_1(1+\epsilon)],\ldots,[a_n(1-\epsilon),a_n(1+\epsilon)])$ is realizable.
To find the minimum $\epsilon$, we do a binary search with help of Theorem~\ref{theorem:CDZ}; 
this takes $O(n \log n)$ time. Once $\epsilon$ is known, 
using Theorem~\ref{theorem:algo1_realizable}, sequence $D$ can be computed
in $O(n\log n)$ time.


\newpage
\appendix
\begin{center}\LARGE{\bf Appendix~~}\end{center}

\section{An $O(n)$ time implementation of Theorem~\ref{theorem:EG} and Theorem~\ref{theorem:CDZ}}
\label{section:EG_CDZ_implementation}

Erd\"os and Gallai~\cite{EG60} presented a characterisation for graphic sequences; 
and later Cai et al.~\cite{CDZ:00} presented a similar characterisation for realizable interval-sequences
(Theorem~\ref{theorem:EG} and Theorem~\ref{theorem:CDZ}).
For completeness, we present simple $O(n)$ algorithm for efficiently computing vectors
$X(D)$, $Y(D)$, and $\varepsilon(\s )$ for any given sequence $D=(d_1,d_2,\ldots,d_n)$ 
and interval-sequence $\s =(A,B)=([a_1,b_1],\ldots,[a_n,b_n])$.
Recall 
$$X_k(D)= \sum_{i=1}^k d_i;~~
Y_k(D)=k(k-1)+\sum_{i=k+1}^n \min(d_i,k);~~
W_k(\s )=\{i\in[k+1,n] ~|~b_i\geq k+1\};$$
$$
	\varepsilon_k(\s ) =
	\begin{cases}
	1 & \text{if } a_i=b_i \text{ for } i\in W_k(\s ) \text{ and } \sum_{i\in W_k(\s )} (b_i+k|W_k(\s )|) \text{ is odd}.\\
    0 & \text{otherwise.}
	\end{cases}
$$

\noindent 
For any $k\in[1,n]$, $j(k)$ is defined to be $0$ if $k>d_1$, else $j(k)\in[1,n]$ is 
the largest index such that $d_{1}, d_{2}, \ldots, d_{j(k)}\geq k$.
Then $$Y_k(D)=k(k-1)+\sum_{i=k+1}^{j(k)} k + \sum_{i=j(k)+1}^{n} d_i=k(j(k)-1)+ \sum_{i=j(k)+1}^{n} d_i.$$
Observe that $j(k)$ is a non-increasing sequence.

\begin{algorithm}[!ht]
Let $D=(d_1,d_2,\ldots,d_n)$ be the input sequence\;
Set $X_1(D)=d_1$, $j(1)=n$, and $W_1(D)=0$\;
\For{$k=2$ to $n$}{
$X_k(D)=X_{k-1}(D)+d_k$\;
\While{$(d_{j(k)}\ngeq k)$}{
$j(k)=j(k)-1$\;
$W_k(D)=W_{k-1}(D)+d_{j(k)}$\;
}
$Y_k(D)=k(j(k)-1)+W_k(D)$\;
}
\Return $(X(D),Y(D))$.
\caption{Computation of vectors}
\label{Algorithm:vectors}
\end{algorithm}

It is easy to verify that the Algorithm~\ref{Algorithm:vectors} takes linear time, 
and correctly computes the vectors $X(D)$ and $Y(D)$.
Also similarly $\varepsilon(\s )$ is computable in $O(n)$ time.
So the following theorem is 
immediate.

\begin{theorem}
Given any sequence $D$ and interval-sequence $\s =(A,B)$ of length $n$, 
the vectors $X(D)$, $Y(D)$, and $\varepsilon(\s )$ are all computable in $O(n)$ time.
\label{theorem:implementation}
\end{theorem}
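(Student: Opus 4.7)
The plan is to treat the three quantities $X(D)$, $Y(D)$, and $\varepsilon(\s)$ separately and show that each admits a single-pass $O(n)$ computation, with the total running time being the sum.

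For $X(D)$, the trivial recursion $X_k(D) = X_{k-1}(D) + d_k$ with $X_0(D) = 0$ delivers the whole vector via one left-to-right scan. For $Y(D)$, I would exploit a monotone auxiliary pointer. Recalling that $D$ is non-increasing, let $j(k)$ be the largest index with $d_{j(k)} \geq k$ (and $0$ if no such index exists); the key observation is that $j(1) \geq j(2) \geq \cdots \geq j(n)$. Since $\min(d_i,k) = k$ for $i$ in the head part $[k+1,j(k)]$ and $\min(d_i,k) = d_i$ for the tail $i > j(k)$, one obtains the compact formula $Y_k(D) = k(j(k)-1) + T(k)$, where $T(k) := \sum_{i=j(k)+1}^n d_i$. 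My algorithm maintains the pair $(j(k), T(k))$ incrementally, decreasing $j$ by one and correspondingly growing $T$ by $d_j$ until the invariant $d_{j(k)} \geq k$ is restored. Because $j$ is monotone non-increasing and bounded in $[0,n]$, the total work done inside the inner loop across all $k = 1, \ldots, n$ is $O(n)$; this is exactly the strategy encoded in Algorithm~\ref{Algorithm:vectors}.

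For $\varepsilon(\s)$, I would reduce each per-$k$ query to a few suffix-sum lookups over a fixed histogram. Observe that $i \in W_k(\s)$ iff $i \geq k+1$ and $b_i \geq k+1$, which is equivalent to the single threshold condition $k \leq m_i$, where $m_i := \min(i, b_i) - 1$. Hence, if $H[v] := |\{i : m_i = v\}|$, then $|W_k(\s)| = \sum_{v \geq k} H[v]$, a suffix count obtainable for all $k$ in one right-to-left pass after bucketing the $m_i$ in $O(n)$ time. Bucketing the values $b_i$ and the Boolean flags $[a_i \neq b_i]$ by $m_i$ and taking analogous suffix sums yields $\sum_{i \in W_k(\s)} b_i$ and a per-$k$ count of indices in $W_k(\s)$ violating $a_i = b_i$. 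Given these three quantities, the condition defining $\varepsilon_k(\s)$ is a constant-time parity-and-equality check, for a total of $O(n)$ over all $k$.

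The only subtle step is the bucketed computation for $\varepsilon(\s)$; the rest reduces to textbook prefix/suffix sums or a standard two-pointer sweep. The crux there is recognizing that ``$i \in W_k(\s)$'' is a one-dimensional threshold in the auxiliary value $m_i$, so every statistic over $W_k(\s)$ becomes a suffix sum in the $m_i$-indexed histogram and succumbs to a single linear-time sweep. Once this observation is in place, correctness follows directly from the definitions, and the three $O(n)$ procedures combine to establish the theorem.
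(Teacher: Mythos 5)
Your proof is correct and follows essentially the same route as the paper: prefix sums for $X(D)$ and the monotone pointer $j(k)$ with amortized $O(n)$ work for $Y(D)$, exactly as in Algorithm~\ref{Algorithm:vectors}. For $\varepsilon(\s)$ the paper merely asserts "similarly computable in $O(n)$ time," so your reduction of membership in $W_k(\s)$ to the single threshold $k \le \min(i,b_i)-1$ and the resulting histogram/suffix-sum computation is a correct and welcome filling-in of the detail the paper omits.
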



\end{document}